\newcommand\oprocendsymbol{\hbox{$\square$}}
\newcommand\oprocend{\relax\ifmmode\else\unskip\hfill\fi\oprocendsymbol}
\newtheorem{theorem}{Theorem}[section]
\newtheorem{proposition}[theorem]{Proposition}
\newtheorem{definition}[theorem]{Definition}
\newtheorem{lemma}[theorem]{Lemma}
\newtheorem{remark}[theorem]{Remark}
\newtheorem{facts}[theorem]{Fact}
\DeclareSymbolFont{bbold}{U}{bbold}{m}{n}
\DeclareSymbolFontAlphabet{\mathbbold}{bbold}
\newcommand{\vect}[1]{\mathbbold{#1}}
        \newcommand{\ovset}[3][0ex]{%
          \mathrel{\mathop{#3}\limits^{
            \vbox to#1{\kern0\ex@
            \hbox{$\scriptstyle#2$}\vss}}}}
\newcommand{\C}{\mathcal{C}}
\newcommand{\dsG}{\ovset{\rightharpoonup}{\mathcal{G}}}
\newcommand{\uG}{|\mathcal{G}|}
\newcommand{\dE}{\ovset{\rightharpoonup}{E}}
\renewcommand{\baselinestretch}{1}
\begin{document}
\renewcommand{\baselinestretch}{0.93}
\setlength\parindent{1em}

\begin{frontmatter}

\title{Structural Balance and Interpersonal Appraisals Dynamics: Beyond All-to-All and Two-Faction Networks}
\thanks[footnoteinfo]{This material is
  based upon work supported in part by the ETH Zurich funding.}
  
\author[IfA]{Wenjun Mei}\ead{wmei@ethz.ch}, \author[cas]{Ge Chen}\ead{chenge@amss.ac.cn}, \author[UCSB]{Noah E. Friedkin}\ead{friedkin@soc.ucsb.edu}, \author[IfA]{Florian D\"{o}rfler}\ead{dorfler@ethz.ch}

\address[IfA]{Automatic Control Laboratory, ETH Zurich, Switzerland}
\address[cas]{Academy of Mathematics and Systems Science, Chinese Academy of Sciences, Beijing, China}
\address[UCSB]{Department of Sociology, University of California, Santa Barbara}

\begin{keyword}
Structural balance; Signed social networks; Co-evolutionary dynamics; Influence process; Homophily.
\end{keyword}

\begin{abstract}
Structural balance theory describes stable configurations of topologies of signed interpersonal appraisal networks. Existing models explaining the convergence of appraisal networks to structural balance either diverge in finite time, or could get stuck in jammed states, or converge to only complete graphs. In this paper we study the open problem how steady non-all-to-all structural balance emerges via local dynamics of interpersonal appraisals. We first compare two well-justified definitions of structural balance for general non-all-to-all graphs, i.e., the triad-wise structural balance and the two-faction structural balance, and thoroughly study their relations. Secondly, based on three widely adopted sociological mechanisms: the symmetry mechanism, the influence mechanism, and the homophily mechanism, we propose two simple models of gossip-like appraisal dynamics, the symmetry-influence-homophily (SIH) dynamics and the symmetry-influence-opinion-homophily (SIOH) dynamics. In these models, the appraisal network starting from any initial condition almost surely achieves non-all-to-all triad-wise and two-faction structural balance in finite time respectively. Moreover, the SIOH dynamics capture the co-evolution of interpersonal appraisals and individuals' opinions. Regarding the theoretical contributions, we show that the equilibrium set of the SIH (SIOH resp.) dynamics corresponds to the set of all the possible triad-wise (two-faction resp.) structural balance configurations of the appraisal networks. Moreover, we prove that, for any initial condition, the appraisal networks in the SIH (SIOH resp.) dynamics almost surely achieve triad-wise (two-faction resp.) structural balance in finite time. Numerical studies of the SIH dynamics are conducted on how the final proportion of negative links depends on the initial proportion of negative links and the network topology. These simulation results imply some insightful take-home messages on whether multilateral relations reduce or exacerbate conflicts. 
\end{abstract}

\end{frontmatter}

\section{Introduction}

\paragraph*{Motivation and problem description}

Properties and dynamics on/of signed networks have been widely studied in social science as well as applied mathematics. \emph{Structural balance} (also referred to as \emph{social balance}) theory, first proposed in the seminal works by Heider~\cite{FH:44,FH:46}, characterizes the stable configurations of interpersonal appraisal networks with both friendly and antagonistic relations. An appraisal network satisfies structural balance if each individual obeys the famous Heider's axioms:  Friends' friends are friends; Friends' enemies are enemies; Enemies' friends are enemies; Enemies enemies are friends.''  As suggested by~\cite{FH:44,FH:46}, imbalance of interpersonal relations sensed by individuals leads to cognitive dissonances that the individuals strive to resolve.
%Empirical studies~\cite{JL-DH-JK:10,MGK:64,HFT:70} indicate that structural balance is a type of stable configurations frequently observed in real social networks. 
Dynamic structural balance theory, aiming to explain how an initially unbalanced network evolves to a balanced state, has recently attracted much interest. In existing models~\cite{TA-PLK-SR:05,TA-PLK-SR:06,AvdR:11,MM-MF-PJK-HRR-MAS:11,KK-PG-PG:05,SAM-JK-RDK-SHS:11,VAT-PVD-PDL:13,SW-MC-KK-AB:15,WM-PCV-GC-NEF-FB:17f}, appraisal networks either diverge in finite time, or get stuck in unbalanced equilibria, or converge to all-to-all graphs satisfying structural balance. It remains an open problem what models lead to the convergence of appraisal networks to structural
balance with arbitrary non-all-to-all network topologies. In this paper, we address this open problem. To be more specific, we propose and analyze dynamics of interpersonal appraisals in which the appraisal networks are initially non-all-to-all and converge to structurally balanced equilibria with also a non-all-to-all topology.

Before studying the convergence of interpersonal appraisals to non-all-to-all structural balance, one would need to first clarify what structural balance means in this scenario. In this paper, the interpersonal appraisal networks are modeled as signed graphs. Intuitively, structural balance in non-all-to-all signed graphs can be defined in two ways. The first one is a straightforward generalization of the aforementioned Heider's four axioms. Mathematically, Heider's four axioms mean that, every three nodes (i.e. individuals) in the appraisal network form a ``positive triad'', i.e., a triad with either three symmetric positive relations or two symmetric positive relations and one symmetric negative relation. Since every three nodes in a non-all-to-all graph do not necessarily form a triad, an intuitive generalization of Heider's axioms is to require that every existing triad in the appraisal network is positive. We refer to such a definition as \emph{triad-wise structural balance}. Compared to Heider's classic structural balance, triad-wise balance allows for more realistic scenarios, e.g., one does not necessarily know all of her/his friends' friends. The second definition, referred to as \emph{two-faction structural balance}, was first proposed by Cartwright et al.~\cite{DC-FH:56} and has been widely adopted in the studies of opinion dynamics on signed networks~\cite{CA:13,JL-XC-TB-MAB:17,GS-CA-JSB:19}. It requires that the appraisal network can be partitioned into two antagonistic factions where the inter-faction relations are all non-positive and the relations within each faction are all non-negative. 

In this paper, we first thoroughly study the relations between triad-wise structural balance and two-faction structural balance, and then propose two simple discrete-time dynamic models, the \emph{symmetry-influence-homophily} (SIH) model and the \emph{symmetry-influence-opinion-homophily} (SIOH) model, such that the interpersonal appraisals almost surely converge to triad-wise and two-faction non-all-to-all structural balance respectively. In our models, links in appraisal networks take values from $\{-1,1,0\}$, corresponding to antagonistic, friendly, and no/neutral relations respectively. At any time, one such link is activated and updated via (some of) the following sociological mechanisms of local interactions: the symmetry mechanism~\cite{RME:76}, the influence mechanism~\cite{NEF-ECJ:11}, the person-person homophily mechanism~\cite{PFL-RKM:54}, and the person-opinion homophily~\cite{DC-FH:56}. The symmetry mechanism means that individuals tend to be friendly to (antagonistic against resp.) those who are friendly to (antagonistic against resp.) themselves. The influence mechanism assumes that one individual's attitude towards another is influenced by their mutual social neighbors' attitudes. The homophily mechanisms in general mean that individuals are friendly to (antagonistic against resp.) each other if they hold similar opinions on some issues, or similar appraisals of others.

\paragraph*{Literature review}

Following the early works by Heider~\cite{FH:44,FH:46}, static structural balance theory has been extensively studied, including the characterization of the balanced configurations~\cite{DC-FH:56,WdN:99}, the degree of balance~\cite{DC-TCG:66,NMH-RBH-CBDS:69}, and empirical validations~\cite{HBM-RR:85,CMR-NEF:17,GF-GI-CA:11}. Generalized structural balance has also been studied in terms of removing some of Heider's four axioms, e.g., see~\cite{JAD:67,OA-JNL-FB-NEF-AKS-BU:17v}. Different from these generalizations, the triad-wise structural balance proposed in this paper extends the applicability of the classic Heider's structure balance from all-to-all graphs to arbitrary topologies by requiring all the existing triads in the appraisal network to satisfy Heider's four axioms. For a comprehensive review of static structural balance theory, we refer to~\cite{XZ-DZ-FYW:15}.

Previous works on dynamic structural balance theory include the discrete-time \emph{local triad dynamics} (LTD)~\cite{TA-PLK-SR:05} and \emph{constrained triad dynamics}~\cite{TA-PLK-SR:06}. These models suffer from the existence of unbalanced equilibria, i.e., the \emph{jammed states}. Other models based on network games are proposed by Malekzadeh et al.~\cite{MM-MF-PJK-HRR-MAS:11} and van de Rijt~\cite{AvdR:11}. The model in~\cite{MM-MF-PJK-HRR-MAS:11} applies to only complete graphs. The model in~\cite{AvdR:11} propose smoothed best-response dynamics with vanishing noise, in which some generalized structurally balanced graphs with arbitrary topologies are the only stochastically stable configurations. But stochastic stability is a weaker condition than almost-sure convergence, and the model in~\cite{AvdR:11} actually does not admit any steady state. Friedkin et al.~\cite{NEF-AVP-FB:18m} propose a generalized model based on rule of transitive closure in which the temporal elimination of violations of Heider's four axioms appears as a special case.

Dynamic structural balance models with continuous link weights have also been widely-studied, e.g., see~\cite{KK-PG-PG:05,SAM-JK-RDK-SHS:11,VAT-PVD-PDL:13,WM-PCV-GC-NEF-FB:17f,PCV-FB:19g}. In terms of the microscopic sociological mechanism, these models are all based on either the influence mechanism~\cite{NEF-ECJ:11} or the homophily mechanism~\cite{PFL-RKM:54}. In these models, the appraisal networks either diverge in finite time~\cite{KK-PG-PG:05,SAM-JK-RDK-SHS:11} or converge from certain sets of initial conditions to complete graphs satisfying structural balance~\cite{WM-PCV-GC-NEF-FB:17f,PCV-FB:19g}.

\paragraph*{Contributions}

The contributions of this paper are as follows. Firstly, we conduct comprehensive analysis of the relation between triad-wise structural balance and two-faction structural balance for non-all-to-all appraisal networks. We show that if a directed signed graph satisfies triad-wise structural balance, then any node's ego-network satisfies two-faction structural balance. Moreover, we provide a graph-theoretic condition with clear geometric interpretation, under which these two definitions of structural balance are equivalent. 

Secondly, to the best of our knowledge, the SIH and SIOH models proposed in this paper are the first to establish the almost-sure convergence of interpersonal appraisals, via local social interactions, to triad-wise and two-faction non-all-to-all structural balance respectively. Since real social networks are usually not all-to-all graphs, our models significantly broaden the applicability of the dynamic structural balance theory and, in this sense, are more realistic than previous models.  

Thirdly, regarding the theoretical analysis of the SIH (SIOH resp.) model, we prove that its equilibrium set corresponds to the set of all the possible triad-wise (two-faction resp.) balanced configurations. We further show that, for any initial condition, the appraisal network almost surely reaches a triad-wise (two-faction resp.) balanced equilibrium in finite time. For the SIOH dynamics, we also prove the convergence of individuals' opinions. 

Fourthly, by numerical studies, we investigate how the final degree of conflicts of the SIH dynamics, i.e., the ratio of negative links at the final steady states, depends on the initial degree of conflicts and the network topology. Simulation results lead to very clear and insightful sociological interpretations: 1) The final degree of conflicts tend to increase with the initial degree of conflicts and such correlation is stronger in sparser networks; 2) Multilateral relations tend to reduce the final degree of conflicts when the initial degree of conflicts is high, but tend to increase the final degree of conflicts when the initial degree of conflicts is low. In addition, such correlations are stronger in sparser networks.

\paragraph*{Organization}
The rest of this paper is organized as follows. Section 2 is about some basic notations and definitions. Section 3 presents the results on the relation between triad-wise structural balance and two-faction structural balance; In Section 4, we propose and analyze the SIH and SIOH dynamics respectively. Section 5 presents the numerical study and some further discussions. All the proofs are provided in the Appendix.

\section{Notations and Definitions}

Let $\phi$ be the empty set and $\mathbb{N}$ be the set of natural numbers. Let $V=\{1,2,\dots,n\}$. 

Two type of graphs are involved in this paper. A \emph{directed and unweighted signed graph} with $n$ nodes is denoted by $\dsG=(V,\dE^+,\dE^-)$, where $V$ is the nodes set and the two disjoint sets $\dE^+\subseteq V\times V$ and $\dE^-\subseteq V\times V$ are the set of all the positive and negative directed links in $\dsG$ respectively. We only consider unweighted graphs and thereby the term ``unweighted'' is omitted in the rest of this paper. An \emph{undirected unsigned graph} is denoted by $\uG=(V,E)$, where $E$ is the set of all the undirected links. A \emph{directed path} (\emph{path} resp.) in $\dsG$ ($\uG$ resp.) from node $i_1$ to node $i_m$ is an ordered sequence of nodes $i_1,\,\dots,i_m$ such that $(i_k,i_{k+1})\in \dE^+\cup \dE^-$ ($\{i_k,i_{k+1}\}\in E$ resp.) for any $k=1,\dots, m-1$. This directed path (path resp.) has length $m-1$. A graph $\dsG$ ($\uG$ resp.) is \emph{strongly connected} (\emph{connected} resp.) if, for any $i,\, j\in V$, there exists at least one directed path (path resp.) from $i$ to $j$. An ordered sequence $(i_1,\dots, i_m)$ of non-repeating nodes is a cycle with length $m$ in $\dsG$ ($\uG$ resp.) if $(i_k,i_{k+1})\in \dE^+\cup \dE^-$ ($\{i_k,i_{k+1}\}\in E$ resp.) for any $k\in \{1,\dots,m\}$. Here we take $i_{m+1}$ as $i_1$. A cycle on $\dsG$ is \emph{positive} if it contains either zero or an even number of negative links. A triad is a cycle with length 3 and a pair of nodes linking to each other is a cycle with length 2. 

Given a graph $\dsG$ ($\uG$ resp.) and a subset of nodes $\tilde{V}\subseteq V$, denote by $\dsG_{\tilde{V}}$ ($\uG_{\tilde{V}}$ resp.) the \emph{induced subgraph} of $\dsG$ ($\uG$ resp.) associated with $\tilde{V}$. Namely, $\dsG_{\tilde{V}}=(\tilde{V},\dE^+_{\tilde{V}},\dE^-_{\tilde{V}})$, where $\dE^+_{\tilde{V}} = (\tilde{V}\times \tilde{V}) \cap \dE^+$ and $\dE^-_{\tilde{V}}  = (\tilde{V}\times \tilde{V}) \cap \dE^-$, and $\uG_{\tilde{V}}=(\tilde{V},E_{\tilde{V}})$, where $E_{\tilde{V}} = \big\{ \{i,j\} \big|i,\,j\in \tilde{V} \big\}\cap E$.
Given a subset of nodes $\tilde{V}\subseteq V$ and subsets of links $\tilde{\dE}^+\subseteq \dE^+_{\tilde{V}}$ and $\tilde{\dE}^-\subseteq \dE^-_{\tilde{V}}$ ($\tilde{E}\subseteq E_{\tilde{V}}$ resp.), the graph $\tilde{\dsG}=(\tilde{V},\tilde{\dE}^+,\tilde{\dE}^-)$ ($\tilde{\uG}=(\tilde{V},\tilde{E})$ resp.) is called a \emph{subgraph} of $\dsG$ ($\uG$ resp.) with the nodes set $\tilde{V}$.

Given a group of $n$ individuals, their interpersonal appraisals are characterized by a ternary \emph{appraisal matrix} $X=(X_{ij})_{n\times n}\in \{-1,0,1\}^{n\times n}$. Here $X_{ij}=1$ ($X_{ij}=-1$ resp.) means that individual $i$ is friendly (antagonistic resp.) to individual $j$, while $X_{ij}=0$ means that either $i$ does not know $j$ or $i$ holds neutral attitude towards $j$. An appraisal matrix $X$ induces a directed and signed graph $\dsG(X)=\big(V,\dE^+(X),\dE^-(X)\big)$, referred to as the \emph{appraisal network}, where $\dE^+(X)=\{(i,j)\in V\times V\,|\, X_{ij}=1\}$ and $\dE^-(X)=\{(i,j)\in V\times V\,|\, X_{ij}=-1\}$. The matrix $X$ also induces an undirected unsigned graph $\uG(X)=(V,E(X))$, where $E(X)=\big{\{}\{i,j\}\,|\,i,j\in V,\,\, X_{ij}\neq 0\big{\}}$. In this paper, we use the terms ``graph'' and ``network'' interchangeably. The appraisals network $\dsG(X)$ is \emph{bilateral}, or, equivalently, the appraisal matrix $X$ is bilateral, if ``$X_{ij}\neq 0$ $\Leftrightarrow$ $X_{ji}\neq 0$'' holds for any $i,\, j\in V$. We consider no self loop, i.e., $X_{ii}=0$ for any $i\in V$. For any  $X\in \{-1,0,1\}^{n\times n}$, define $\uG(X)=(V,E)$ as $V=\{1,\dots, n\}$ and $E=\big{\{} \{i,j\}\big|\, X_{ij}\neq 0\text{ or }X_{ji}\neq 0 \big{\}}$.

The triad-wise structural balance described in the Introduction is mathematically formalized as follows.
\smallskip
\begin{definition}[Triad-wise structural balance]\label{def:triad-wise-balance}
An appraisal network $\dsG(X)$ satisfies triad-wise structural balance, or, equivalently, is triad-wise balanced, if the appraisal matrix $X$ satisfies the following properties:
\begin{enumerate}[topsep=-7pt,label={P\arabic*:}]
\item (Symmetric appraisals) $X_{ij}X_{ji}>0$ for any $(i,j)\in \dE^+\cup \dE^-$;
\item (Positive triads) $X_{ij}X_{jk}X_{ki}>0$ for any triad $(i,j,k)$ in $\dsG(X)$.
\end{enumerate} 
\end{definition}
Note that, by property P1 above, any appraisal network satisfying triad-wise structural balance is bilateral. 

The two-faction structural balance is defined below.
\smallskip
\begin{definition}[Two-faction structural balance]\label{def:two-faction-balance}
An appraisal network $\dsG(X)$ satisfies two-faction structural balance, or, equivalently, is two-faction balanced, if either $\dsG(X)$ has no negative link or its nodes set $V$ can be partitioned into two disjoint sets $V_1$ and $V_2$ such that $X_{ij}\ge 0$, for any $i,\,j\in V_1$ or any $i,\, j\in V_2$, and $X_{ij}\le 0$, for any $i\in V_1$, $j\in V_2$, or any $i\in V_2$, $j\in V_1$.
\end{definition}

The following lemma was first proposed in~\cite{DC-FH:56} and provides a necessary and sufficient condition for two-faction structural balance in strongly connected graphs.
\smallskip
\begin{lemma}[Cycle-wise structural balance]\label{lem:cycle-to-two-faction}
Given a bilateral and strongly connected appraisal network $\dsG(X)$, it satisfies two-faction structural balance if and only if every cycle in $\dsG(X)$ is positive.
\end{lemma}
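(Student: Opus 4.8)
The plan is to prove both directions of the equivalence by relating positivity of cycles to a consistent two-coloring of the nodes. The forward direction (two-faction balance implies all cycles positive) is routine: given a valid partition $V_1, V_2$, I would assign to each node a sign $\sigma_i = +1$ if $i \in V_1$ and $\sigma_i = -1$ if $i \in V_2$. The defining inequalities of two-faction balance say that along any existing link $(i,j)$, the sign $X_{ij}$ agrees in sign with $\sigma_i \sigma_j$ whenever $X_{ij} \neq 0$; that is, a link is negative exactly when its two endpoints lie in different factions. Traversing any cycle $(i_1, \dots, i_m)$ and taking the product of the link signs then equals $\prod_k \sigma_{i_k} \sigma_{i_{k+1}} = \prod_k \sigma_{i_k}^2 = +1$ (each node index appears exactly twice around the closed cycle), so the number of negative links in the cycle is even and the cycle is positive.

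The reverse direction is the substantive one, where I would explicitly construct the partition from the hypothesis that every cycle is positive. Fix an arbitrary root node $r$ and, using strong connectivity, define a candidate sign function $\sigma_i$ for each node $i$ as the sign of any directed path from $r$ to $i$, where the sign of a path is $(-1)$ raised to the number of negative links it contains. The key step is to show this $\sigma_i$ is \emph{well-defined}, i.e., independent of the chosen path. Suppose two directed paths $P$ and $P'$ both run from $r$ to $i$; concatenating $P$ with the reverse of $P'$ produces a closed walk at $r$, and because the graph is bilateral each reversed link $(j,k)$ can be replaced by the oppositely-oriented link $(k,j)$, which by property P1 (symmetry) carries the same sign. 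This closed walk decomposes into cycles, each of which is positive by hypothesis, so the total sign of the walk is $+1$, forcing $P$ and $P'$ to have the same sign and establishing well-definedness. I would then set $V_1 = \{i : \sigma_i = +1\}$ and $V_2 = \{i : \sigma_i = -1\}$.

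It remains to verify that this partition actually witnesses two-faction balance, namely that every existing link is positive iff its endpoints share the same sign and negative iff they differ. For any link $(i,j)$, extend a path from $r$ to $i$ by this single link to obtain a path from $r$ to $j$; by well-definedness the sign of the extended path equals $\sigma_j$, which gives $\sigma_j = \sigma_i \cdot \operatorname{sign}(X_{ij})$, i.e. $\operatorname{sign}(X_{ij}) = \sigma_i \sigma_j$. Hence $X_{ij} > 0$ exactly when $\sigma_i = \sigma_j$ and $X_{ij} < 0$ exactly when $\sigma_i \neq \sigma_j$, which is precisely the requirement in Definition~\ref{def:two-faction-balance} (using bilaterality and symmetry to transfer the condition to the reverse links $X_{ji}$ as well). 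If the graph has no negative link at all, then every cycle is trivially positive and we may take $V_2 = \phi$.

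The main obstacle is the well-definedness argument: it requires care in handling the directed-versus-undirected interplay, since paths are directed but closed walks must be decomposed into cycles whose positivity we may invoke. The bilaterality assumption together with property~P1 is exactly what lets me reverse links without changing their sign, converting an arbitrary closed directed walk into a signed sum of genuine directed cycles; without this symmetry the reverse orientation of a link need not even exist in $\dsG(X)$, so this is the step that genuinely uses all the hypotheses and deserves the most attention.
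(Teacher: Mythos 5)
The paper offers no proof of this lemma: it is quoted as a classical result of Cartwright and Harary \cite{DC-FH:56}, so there is no in-paper argument to compare against. Your proof is the standard ``switching'' argument behind Harary's structure theorem (define node signs by path products from a root, use cycle positivity for well-definedness), and its overall architecture is correct, including the treatment of the empty-faction and no-negative-link cases.

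One justification is misplaced, however. In the well-definedness step you replace each reversed link $(j,k)$ by $(k,j)$ and assert it ``carries the same sign by property P1.'' Property P1 belongs to Definition~\ref{def:triad-wise-balance} (triad-wise structural balance), which is \emph{not} a hypothesis of this lemma; bilaterality only guarantees that $X_{kj}\neq 0$ whenever $X_{jk}\neq 0$, not that the signs agree. The step is nevertheless salvageable from the hypotheses you do have: in this paper's conventions a mutually linked pair is a cycle of length~2, so the assumption that every cycle is positive gives $X_{jk}X_{kj}>0$ for every bilateral pair, which is exactly the sign-symmetry you need. You should cite the positivity of $2$-cycles rather than P1 here (and, for the same reason, note that your closed-walk decomposition may produce $2$-cycles as atoms, which is fine). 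A second, smaller point: when you verify $\operatorname{sign}(X_{ij})=\sigma_i\sigma_j$ by appending the link $(i,j)$ to a path ending at $i$, the result may be a walk rather than a simple path if $j$ already lies on it; your well-definedness argument does extend verbatim from paths to walks, but it is worth saying so explicitly since $\sigma$ was defined via paths.
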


\section{Relations between Triad-Wise Structural Balance and Two-Faction Structural Balance}

\subsection{General relations}

Triad-wise structural balance is a local feature of signed appraisal networks, while two-faction structural balance characterizes some global structure. It is well known that, in all-to-all appraisal networks, these two definitions are equivalent~\cite{DC-FH:56}. In non-all-to-all bilateral appraisal networks, we have the following obvious fact.
\begin{facts}\label{fact:two-faction-implies-traid-wise}
Given a bilateral appraisal matrix $X\in \{-1,0,1\}^{n\times n}$, the associated appraisal network $\dsG(X)$ satisfies triad-wise structural balance if $\dsG(X)$ satisfies two-faction structural balance.
\end{facts}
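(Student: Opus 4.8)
The plan is to assume $\dsG(X)$ is two-faction balanced and derive the two defining properties P1 and P2 of triad-wise balance, exploiting throughout the hypothesis that $X$ is bilateral. The key preliminary observation I would establish is a clean characterization of the sign of every existing link in terms of the faction partition. If $\dsG(X)$ has no negative link, then trivially $X_{ij}=1$ whenever $X_{ij}\neq 0$. Otherwise, let $V=V_1\cup V_2$ be the two-faction partition. For any existing link $(i,j)\in\dE^+\cup\dE^-$, bilaterality guarantees $X_{ji}\neq 0$ as well, and the partition conditions then force $X_{ij}=X_{ji}=1$ when $i$ and $j$ lie in the same faction, and $X_{ij}=X_{ji}=-1$ when they lie in different factions. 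Thus the sign of an existing link is $+1$ exactly when its two endpoints share a faction and $-1$ exactly when they straddle the two factions.

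With this in hand, property P1 is immediate. For any $(i,j)\in\dE^+\cup\dE^-$ the characterization above yields $X_{ij}=X_{ji}\in\{1,-1\}$ in every case, so $X_{ij}X_{ji}=1>0$; in the degenerate no-negative-link branch one likewise has $X_{ij}X_{ji}=1$.

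For property P2, I would consider any triad $(i,j,k)$, so that all three links $(i,j),(j,k),(k,i)$ exist and are therefore nonzero, and argue by a parity count on how many of them are inter-faction links. By the characterization, $X_{ij}X_{jk}X_{ki}<0$ would require an odd number of the three pairs $\{i,j\},\{j,k\},\{k,i\}$ to cross between $V_1$ and $V_2$. The crux is the elementary fact that, when three nodes are distributed into two sets, the number of crossing pairs among them is always even: if all three share a faction the count is $0$, and if they split two-to-one the count is exactly $2$, while an odd count is impossible. Hence the product of the three signs equals $+1$, giving $X_{ij}X_{jk}X_{ki}>0$ (and again the no-negative-link branch makes the product $1$), which verifies P2 and completes the argument. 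The only mild subtleties to handle carefully are the separate treatment of the no-negative-link branch of the two-faction definition and the systematic use of bilaterality to pass from the existence of a link in one direction to its sign in both directions.
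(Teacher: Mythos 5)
Your proof is correct. The paper states this as an ``obvious fact'' and supplies no proof of its own, so there is nothing to contrast with; your argument --- forcing the sign of each existing bilateral link from the faction partition, then noting that among any three nodes distributed into two factions the number of inter-faction pairs is $0$ or $2$, hence the triad product is $+1$ --- is exactly the natural way to make the claim rigorous, and you correctly handle both the no-negative-link branch of Definition~\ref{def:two-faction-balance} and the role of bilaterality in establishing property P1.
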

In general, triad-wise structural balance does not imply two-faction structural balance, e.g., see Figure~\ref{fig:diff_def_struc_balance}. However, some meaningful connections can still be made from triad-wise structural balance to two-faction structural balance. For any $i\in V$ in $\dsG(X)$, define $N_i=\{j\in V\,|\,X_{ij}\neq 0\}\cup \{i\}$ and define the induced subgraph $\dsG_{N_i}(X)$ as node $i$'s \emph{ego-network}, i.e., the induced subgraph in $\dsG(X)$ involving node $i$ itself and all the nodes that node $i$ has links to. We have the following proposition and its proof is given in Appendix~\ref{appendix:prop:ego-net-balance}.

\begin{figure} 
  \captionsetup[subfigure]{}
  \centering
  \subfloat[Graph 1]{%
    \includegraphics[width=0.3\linewidth]{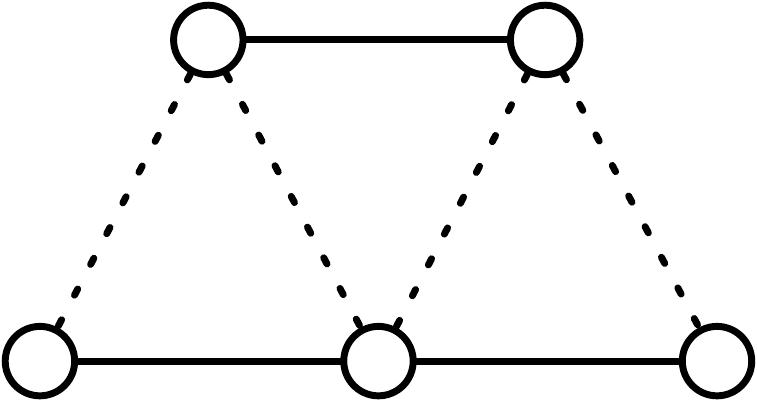}}
  \hspace{1.5cm}
  \subfloat[Graph 2]{%
    \includegraphics[width=0.33\linewidth]{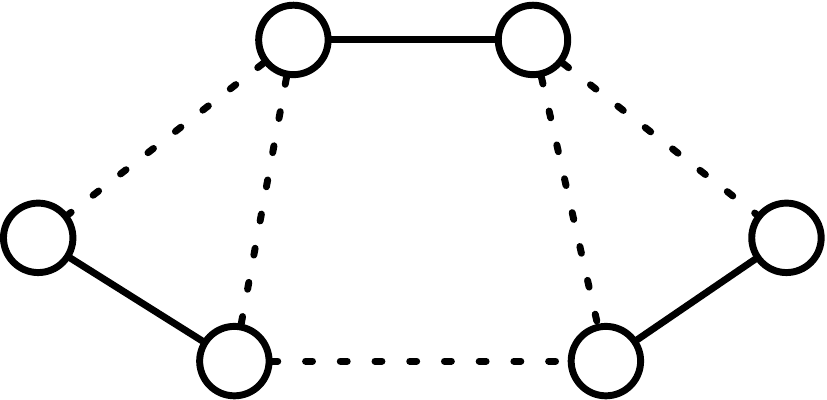}}
  \caption{Examples of the difference between triad-wise structural balance and two-faction structural balance. Here all the directed links are bilateral and sign-symmetric. Therefore the arrows are omitted. Solid lines mean positive bilateral links while dash lines mean negative bilateral links. Graph 1 satisfies both definitions of structural balance. Graph 2 satisfies triad-wise structural balance but does not satisfy two-faction structural balance. However, for each node in Graph 2, her ego-network satisfies two-faction structural balance.}
  \label{fig:diff_def_struc_balance} 
\end{figure} 

\smallskip
\begin{proposition}[Structural balance in ego-networks]\label{prop:ego-net-balance}
 For any appraisal matrix $X\in \{-1,0,1\}^{n\times n}$, if $\dsG(X)$ satisfies triad-wise structural balance, then $\dsG_{N_i}(X)$ satisfies two-faction structural balance for any $i\in V$.
\end{proposition}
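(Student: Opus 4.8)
The plan is to fix an arbitrary node $i\in V$ and show that the ego-network $\dsG_{N_i}(X)$, which by construction is bilateral (since triad-wise balance forces $X$ to be bilateral via property P1), can be partitioned into two antagonistic factions. First I would use node $i$ itself as the natural ``seed'' of the partition. Define $V_1 = \{i\}\cup\{j\in N_i \mid X_{ij}=1\}$, the set consisting of $i$ together with all of $i$'s friends, and define $V_2 = \{j\in N_i \mid X_{ij}=-1\}$, the set of $i$'s enemies. Since every $j\in N_i\setminus\{i\}$ satisfies $X_{ij}\neq 0$, and by P1 the sign of $X_{ij}$ is well-defined and matches $X_{ji}$, these two sets are disjoint and partition $N_i$. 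The entire argument then reduces to verifying that this candidate partition satisfies the sign conditions of Definition~\ref{def:two-faction-balance}.

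The key step is to check the three types of intra/inter-faction links using property P2 (positive triads). Take any two distinct nodes $j,k\in N_i\setminus\{i\}$ that are linked to each other, i.e.\ $X_{jk}\neq 0$; then $(i,j,k)$ is a triad in $\dsG(X)$, so P2 gives $X_{ij}X_{jk}X_{ki}>0$, and by symmetry (P1) $X_{ki}=X_{ik}$ has the same sign as $X_{ik}$. Rearranging, the sign of $X_{jk}$ equals the sign of the product $X_{ij}X_{ik}$. I would now split into cases: if $j,k$ are both friends of $i$ (both in $V_1\setminus\{i\}$), then $X_{ij}X_{ik}>0$, forcing $X_{jk}>0$, which is the required non-negative intra-faction condition; if both are enemies of $i$ (both in $V_2$), again $X_{ij}X_{ik}>0$ so $X_{jk}>0$, consistent with both lying in $V_2$; and if one is a friend and the other an enemy ($j\in V_1,\,k\in V_2$), then $X_{ij}X_{ik}<0$, forcing $X_{jk}<0$, the required non-positive inter-faction condition. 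The links involving $i$ itself are handled directly by definition of $V_1$ and $V_2$: all links from $i$ into $V_1$ are positive and all links from $i$ into $V_2$ are negative, matching $i\in V_1$.

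Throughout I would exploit P1 to move freely between $X_{jk}$ and $X_{kj}$, so that the conditions ``$X_{ij}\ge 0$ for $i,j$ in the same faction'' and ``$X_{ij}\le 0$ across factions'' in Definition~\ref{def:two-faction-balance} hold in both directional orientations simultaneously; pairs with $X_{jk}=0$ trivially satisfy both the $\ge 0$ and $\le 0$ requirements and so impose no constraint. One edge case to dispatch is when $\dsG_{N_i}(X)$ contains no negative link at all, in which case the first clause of Definition~\ref{def:two-faction-balance} is satisfied outright and nothing further is needed; otherwise the partition $(V_1,V_2)$ above works, and it is permissible for either $V_1$ or $V_2$ to be empty under the definition.

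The main obstacle is not any single deep computation but rather the careful case analysis and the consistent use of sign symmetry: one must be vigilant that P2 is only applicable when $(i,j,k)$ is genuinely a triad (i.e.\ all three pairwise links exist), so the argument only constrains the signs of \emph{existing} links $X_{jk}$, while \emph{absent} links $X_{jk}=0$ need no verification. The elegance of the proof lies in recognizing that the sign of every existing edge in the ego-network is completely determined by the two ``radial'' edges to the center $i$ via the positive-triad condition, which is exactly the mechanism that enforces a clean two-faction split.
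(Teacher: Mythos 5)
Your proposal is correct and follows essentially the same route as the paper's own proof: seed the partition at node $i$, split $N_i$ into $i$'s friends (plus $i$) and $i$'s enemies, and use the positive-triad condition $X_{jk}=X_{ij}X_{ik}$ (together with sign symmetry from P1) to verify the intra- and inter-faction sign requirements, with the all-positive ego-network handled as the degenerate case. No gaps.
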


Proposition~\ref{prop:ego-net-balance} has a clear sociological interpretation. Since individuals strive to resolve the cognitive dissonance generated from structural imbalanced~\cite{FH:44,FH:46} and since, intuitively, individuals can feel the imbalance from only their ego-networks, as long as the entire appraisal network satisfies triad-wise balance, all the individuals' ego-networks satisfy structural balance in both the triad-wise sense and the two-faction sense. Therefore, whenever an appraisal network satisfies triad-wise structural balance, even if it is not two-faction balanced, individuals should not feel any cognitive dissonance and thereby have no motivation to further adjust their appraisals of others. In Section 4, we will see a profound implication of Proposition~\ref{prop:ego-net-balance}: any sort of local appraisal dynamics driven by imbalance in individuals' ego-networks do not necessarily achieve the global two-faction balance.

\subsection{Conditions for equivalence}

In this subsection, we further investigate under what conditions the triad-wise structural balance is equivalent to the two-faction structural balance in non-all-to-all appraisal networks. The main results reply on some important concepts and properties presented below by Definitions~\ref{def:maximal-cyclic-subgraph},~\ref{def:chordal-graph},~\ref{def:subchordal-cycle}, Facts~\ref{fact:chord-split-cycle-into-2-cycles},~\ref{fact:triad-partition-subchordal}, and Lemma~\ref{lem:induced-graphs-chordal}. See Figure~\ref{fig:visual-concepts-chordal} for their visualized illustrations. 
\smallskip
\begin{definition}[Maximal cyclic subgraph]\label{def:maximal-cyclic-subgraph}
Consider an undirected unsigned graph $\uG=(V,E)$. An induced subgraph $\uG_{\tilde{V}}$ with $\tilde{V}\subseteq V$ is a maximal cyclic subgraph if it satisfies the following two conditions:
\begin{enumerate}[topsep=-7pt,itemsep=2pt]
\item There exists a cycle $(i_1,\dots, i_m)$ in $\uG$ passing through exactly all the nodes in $\tilde{V}$, i.e., $\tilde{V}=\{i_1,\dots, i_m\}$;
\item For any $\hat{V}$ such that $\tilde{V}\subset \hat{V}$, there does not exist a cycle in $\uG$ that passes through all the nodes in $\hat{V}$, i.e., the nodes $i_1,\dots, i_m$ are not in any cycle longer than $(i_1,\dots, i_m)$ in $\uG$.
\end{enumerate}  
\end{definition}
\smallskip
\begin{definition}[Chords and chordal graphs~\cite{JRSB-BP:93}]\label{def:chordal-graph}
In undirected unsigned graphs, a link is a chord of a cycle if it connects two non-consecutive nodes on that cycle. An undirected unsigned graph is a chordal graph if every cycle with length greater than three has at least one chord. A triad alone is also a chordal graph.
\end{definition}
Regarding cycles and chords, the following is obvious.
\begin{facts}\label{fact:chord-split-cycle-into-2-cycles}
Given a cycle $(i_1,\, i_2\, \dots,\, i_m)$ in an undirected unsigned graph $\uG=(V,E)$, if $\{i_p,i_q\}\in E$ is a chord of this cycle, with $q>p+1$, then $(i_1,\,\dots,\, i_p,\, i_q,\, i_m)$ and $(i_p,\, i_{p+1},\, \dots,\, i_q)$ are both cycles in graph $\uG$. 
\end{facts}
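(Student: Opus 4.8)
The plan is to verify each of the two claimed sequences directly against the definition of a cycle given in Section~2. Recall that an ordered sequence $(j_1,\dots,j_\ell)$ is a cycle in $\uG$ exactly when its nodes are non-repeating and $\{j_k,j_{k+1}\}\in E$ holds for every $k\in\{1,\dots,\ell\}$, with the convention $j_{\ell+1}=j_1$. So for each candidate sequence there are only two things to check: that no node is repeated, and that every consecutive pair (including the wrap-around pair) is an edge of $\uG$. Both sequences are built from arcs of the original cycle closed off by the chord, so the verification is a short bookkeeping argument.

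First I would treat the sequence $(i_1,\dots,i_p,i_q,i_{q+1},\dots,i_m)$ from the statement. Its node set is $\{i_1,\dots,i_p\}\cup\{i_q,\dots,i_m\}$, obtained from $\{i_1,\dots,i_m\}$ by deleting $i_{p+1},\dots,i_{q-1}$; since $i_1,\dots,i_m$ are pairwise distinct by hypothesis, deleting nodes cannot create a repetition, so non-repetition holds. For the edge condition, every consecutive pair $\{i_k,i_{k+1}\}$ with $k\in\{1,\dots,p-1\}\cup\{q,\dots,m-1\}$ is inherited directly from the original cycle, the single new consecutive pair $\{i_p,i_q\}$ is an edge precisely because it is the chord, and the closing pair $\{i_m,i_1\}$ is the wrap-around edge of the original cycle. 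Hence this sequence is a cycle in $\uG$.

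Next I would treat $(i_p,i_{p+1},\dots,i_q)$ in the same way. Its nodes $i_p,\dots,i_q$ are a subset of the distinct nodes $i_1,\dots,i_m$, so non-repetition is immediate; every consecutive pair $\{i_k,i_{k+1}\}$ with $k\in\{p,\dots,q-1\}$ is an edge of the original cycle, and the closing pair $\{i_q,i_p\}$ is again the chord. Thus this sequence is also a cycle, which completes the argument.

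There is no genuine obstacle here, which is why the paper labels the statement a Fact: the entire content is that the chord closes each of the two arcs (the ``outer'' arc $i_q,\dots,i_m,i_1,\dots,i_p$ and the ``inner'' arc $i_p,\dots,i_q$) into a cycle. The only points deserving a word of care are that $q>p+1$ forces the inner arc to have length at least two (so $(i_p,\dots,i_q)$ is a genuine cycle and not a repeated edge), and that $\{i_p,i_q\}$ is by hypothesis a \emph{chord}, i.e.\ $i_p$ and $i_q$ are non-consecutive on the original cycle; this excludes the degenerate case $(p,q)=(1,m)$, where $\{i_1,i_m\}$ would be the wrap-around edge rather than a chord, and thereby ensures neither resulting sequence collapses.
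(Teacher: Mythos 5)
Your verification is correct and complete: you check non-repetition and edge-membership for both sequences, with the chord supplying the one new edge in each, and you rightly note why $q>p+1$ and the non-consecutiveness of $i_p,i_q$ rule out degeneracies. The paper gives no proof at all for this Fact (it is declared obvious immediately before the statement), and your direct bookkeeping argument is precisely the reasoning it implicitly relies on.
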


The following lemma about chordal graphs are frequently used in this paper.
\smallskip
\begin{lemma}[Induced chordal subgraphs~\cite{JRSB-BP:93}]\label{lem:induced-graphs-chordal}
Every induced subgraph of a chordal graph is also chordal.
\end{lemma}
\smallskip

\begin{definition}[Subchordal cycles]\label{def:subchordal-cycle}
A cycle $(i_1,\dots,i_m)$ in an undirected unsigned graph $\uG=(V,E)$ is a subchordal cycle if in $\uG$ there exists a subgraph $\uG' = (V',E')$ associated with $\C$ such that
\begin{enumerate}[topsep=-7pt,itemsep=2pt]
\item $V'=\{i_1,\dots, i_m\}$, i.e., the node set of $\uG'$ is the set of all the nodes in the cycle;
\item $\big{\{} \{i_1,i_2\},\, \{i_2,i_3\},\,\dots,\, \{i_m,i_1\} \big{\}}\subseteq E'$, i.e., $\uG'$ contains all the links the cycle;
\item $\uG'$ is a chordal graph.
\end{enumerate} 
\end{definition}
\smallskip

The concept of subchordal cycles has a very clear geometric interpretation and the proof is in Appendix~\ref{appendix:triad-partition-subchordal}.
\begin{facts}\label{fact:triad-partition-subchordal}
Suppose a cycle $(i_1,\dots,i_m)$ in an undirected unsigned graph $\uG$ is a subchordal cycle, associated with a subgraph $\uG'$ as defined in Definition~\ref{def:subchordal-cycle}. If $\uG'$ is embedded on a plane such that the cycle $(i_1,\dots, i_m)$ forms a convex polygon, then there exists a set of triads in $\uG'$, i.e., triangles on the plane, that forms a non-overlapping partition of the polygon.
\end{facts}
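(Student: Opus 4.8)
The plan is to argue by strong induction on the number $m$ of nodes in the cycle, which is the number of sides of the convex polygon. For the base case $m=3$, the cycle $(i_1,i_2,i_3)$ is itself a triad, i.e.\ a single triangle (and a triad alone is chordal by Definition~\ref{def:chordal-graph}), so it trivially forms a non-overlapping partition of the polygon. For the inductive step I would assume the claim for all subchordal cycles with fewer than $m$ nodes and treat the case $m\ge 4$.

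First I would produce a chord. Since every boundary edge $\{i_1,i_2\},\dots,\{i_m,i_1\}$ lies in $E'$, the sequence $(i_1,\dots,i_m)$ is a cycle of length $m>3$ in $\uG'$; because $\uG'$ is chordal, this cycle has a chord $\{i_p,i_q\}\in E'$ with $q>p+1$. As a chord joins non-consecutive nodes of the cycle, the wrap-around case $(p,q)=(1,m)$ is excluded. By Fact~\ref{fact:chord-split-cycle-into-2-cycles}, this chord splits the cycle into the two shorter cycles $C_A=(i_1,\dots,i_p,i_q,i_{q+1},\dots,i_m)$ and $C_B=(i_p,i_{p+1},\dots,i_q)$, which share exactly the two nodes $i_p,i_q$ and together cover all $m$ nodes.

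The key geometric observation is that, in the convex embedding, $\{i_p,i_q\}$ is a diagonal joining two non-adjacent vertices of the polygon, and such a diagonal cuts the polygon into two convex sub-polygons with disjoint interiors whose union (together with the diagonal) is the whole polygon, and whose boundaries are exactly $C_A$ and $C_B$. It then remains to verify that each of $C_A$ and $C_B$ is again a subchordal cycle with strictly fewer than $m$ nodes, so that the induction hypothesis applies. I would take the witnessing chordal subgraphs to be the induced subgraphs $\uG'_A$ and $\uG'_B$ of $\uG'$ on the node sets $V_A=\{i_1,\dots,i_p,i_q,\dots,i_m\}$ and $V_B=\{i_p,\dots,i_q\}$: each contains all of the corresponding boundary edges (the relevant original cycle edges, which lie in $E'$, together with the chord $\{i_p,i_q\}\in E'$), each is a subgraph of $\uG$ since $\uG'$ is, and each is chordal by Lemma~\ref{lem:induced-graphs-chordal}. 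A short count using $q>p+1$ and $(p,q)\neq(1,m)$ shows that both $C_A$ and $C_B$ have between $3$ and $m-1$ nodes.

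Finally, applying the induction hypothesis to the two convex sub-polygons yields triangulations into triads of $\uG'_A$ and $\uG'_B$ respectively; since these are induced subgraphs of $\uG'$, each such triad is a triad of $\uG'$. As the two sub-polygons have disjoint interiors and jointly cover the polygon, the union of the two triangulations is a non-overlapping partition of the whole polygon into triads of $\uG'$, which closes the induction. The main obstacle I anticipate is the careful verification that the two pieces inherit the subchordal-cycle structure, in particular that the induced subgraphs remain chordal (handled by Lemma~\ref{lem:induced-graphs-chordal}) and that the diagonal genuinely splits the convex polygon into two non-degenerate convex parts with disjoint interiors; the accompanying vertex-count bookkeeping is routine but must be carried out to guarantee that both parts are strictly smaller.
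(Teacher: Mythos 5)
Your proposal is correct and follows essentially the same route as the paper's own proof: induction on the cycle length, producing a chord from chordality of $\uG'$, splitting the cycle via Fact~\ref{fact:chord-split-cycle-into-2-cycles} into two shorter subchordal cycles (whose witnessing chordal subgraphs come from Lemma~\ref{lem:induced-graphs-chordal}), and gluing the two resulting triangulations along the diagonal of the convex polygon. Your version is slightly more explicit about the vertex-count bookkeeping, but the argument is the same.
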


\begin{figure}
\centering
\includegraphics[width=0.65\linewidth]{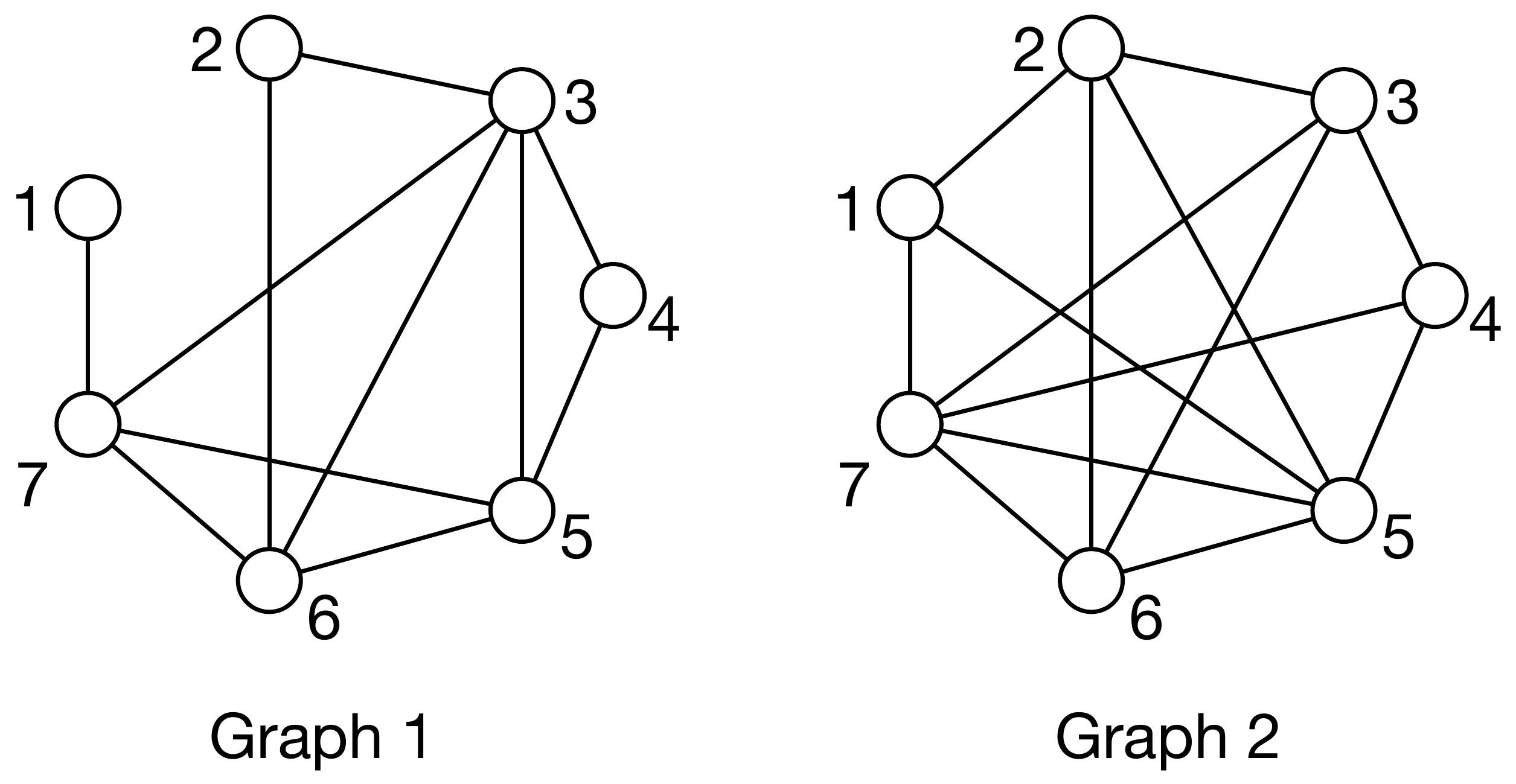}
\caption{Visualized illustrations of maximal cyclic subgraphs, chordal graphs, and subchordal cycles. Denote Graph 1 (undirected and unsigned) by $\uG$. The induced subgraph $\uG_{\{3,4,5,6,7\}}$ is a maximal cyclic subgraph, while the induced subgraph $\uG_{\{3,4,5,6\}}$ is not. The links $\{3,6\}$, $\{3,5\}$, and $\{5,7\}$ are chords of the cycle $(3,4,5,6,7)$. As indicated by Fact~\ref{fact:chord-split-cycle-into-2-cycles}, $\{3,6\}$ splits the cycle $(3,4,5,6,7)$ into two cycles $(3,6,7)$ and $(3,4,5,6)$. Graph 1 is a chordal graph according to Definition~\ref{def:chordal-graph}, and one could easily check that any of its induced subgraphs is also chordal, as indicated by Lemma~\ref{lem:induced-graphs-chordal}. Graph 2 is not a chordal graph since some of the cycles with length greater than 3, e.g., $(3,4,5,6)$, does not have a chord. However, the cycle $(3,4,5,6,7)$ in Graph 2 is a subchordal graph since the subgraph $\uG' = (V',E')$, with $V'=\{3,4,5,6,7\}$ and $E'=\big{\{} \{3,4\},\, \{4,5\},\, \{5,6\},\, \{6,7\},\, \{7,3\},\, \{4,7\},\, \{5,7\} \big{\}}$ is a chordal graph. As indicated by Fact~\ref{fact:triad-partition-subchordal}, if $(3,4,5,6,7)$ in Graph 2 is embedded on a plane as a convex polygon, then there exist triads $(3,4,7)$, $(4,5,7)$, and $(5,6,7)$ in $\uG'$ corresponding to a non-overlapping partition of this polygon.}\label{fig:visual-concepts-chordal}
\end{figure}

With all the preparation work, now we are ready to present some graph-theoretic conditions on which the triad-wise structural balance and the two-faction structural balance are equivalent. First of all, we point out that, as a straightforward consequence of statement~(iii) of Lemma~\ref{lem:properties-subchordal-cycles} and Lemma~\ref{lem:cycle-to-two-faction}, the triad-wise structural balance and the two-faction structural balance are equivalent if the corresponding undirected unsigned graph is connected and every cycle in it is subchordal. 
\smallskip
\begin{proposition}\label{prop:all-cycle-subchordal}
Given any undirected unsigned graph $\uG$ that is connected. If every cycle in it is subchordal, then, for any bilateral appraisal network $\dsG(X)$ with $\uG(X)=\uG$, $\dsG(X)$ satisfies triad-wise structural balance if and only if $\dsG(X)$ satisfies two-faction balance. 
\end{proposition}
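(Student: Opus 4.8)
The plan is to prove the two implications separately, with the forward direction essentially free and the converse carrying all the work. The implication that two-faction balance implies triad-wise balance is exactly Fact~\ref{fact:two-faction-implies-traid-wise}, so I would only need to establish the converse. Suppose then that $\dsG(X)$ satisfies triad-wise structural balance. By property P1 of Definition~\ref{def:triad-wise-balance}, $X$ is bilateral and sign-symmetric, so the hypothesis that $\uG(X)=\uG$ is connected forces $\dsG(X)$ to be strongly connected (each undirected edge corresponds to a pair of oppositely-oriented directed links). Hence Lemma~\ref{lem:cycle-to-two-faction} applies, and it suffices to show that every cycle in $\dsG(X)$ is positive; two-faction balance then follows at once.

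Next I would fix an arbitrary cycle $(i_1,\dots,i_m)$ in $\dsG(X)$ and reduce its positivity to the positivity of triads. Since $\uG=\uG(X)$ is connected and every cycle in it is subchordal by hypothesis, Definition~\ref{def:subchordal-cycle} provides a chordal subgraph $\uG'=(V',E')$ with $V'=\{i_1,\dots,i_m\}$ that contains all the edges of the cycle and whose edges all lie in $E(X)$. Embedding $\uG'$ on a plane so that $(i_1,\dots,i_m)$ forms a convex polygon, Fact~\ref{fact:triad-partition-subchordal} yields a collection of triads $T_1,\dots,T_{m-2}$ in $\uG'$ that triangulate the polygon without overlap. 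Because $\uG'$ is a subgraph of $\uG(X)$ and $X$ is bilateral, each $T_\ell$ is a genuine triad of the appraisal network, so property P2 of Definition~\ref{def:triad-wise-balance} guarantees that every $T_\ell$ is positive.

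The heart of the argument is then a sign-counting step. For each edge $\{a,b\}\in E'$, write $s_{ab}=\sign(X_{ab})$, which is well defined and equal to $\sign(X_{ba})$ by property P1. I would form the product of the triad sign-products $\prod_{\ell=1}^{m-2}\prod_{\{a,b\}\in T_\ell} s_{ab}$. In the triangulation of a convex polygon, each of the $m$ boundary edges (precisely the edges of the original cycle, with $i_{m+1}=i_1$) belongs to exactly one triangle, while each of the chords used in the triangulation belongs to exactly two triangles. Hence every chord contributes a factor $s_{ab}^2=1$ and cancels, leaving the product equal to $\prod_{k=1}^{m} s_{i_k i_{k+1}}$, the sign-product of the cycle. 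On the other hand, each triad is positive, so the left-hand product equals $+1$. Therefore the cycle's sign-product is $+1$, i.e., the cycle is positive. As the cycle was arbitrary, every cycle in $\dsG(X)$ is positive and Lemma~\ref{lem:cycle-to-two-faction} completes the proof.

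The step I expect to be the main obstacle is the combinatorial bookkeeping in the sign-counting argument: one must justify that in a triangulation of a convex polygon every interior chord is shared by exactly two triangles and every boundary edge by exactly one, which is precisely the non-overlapping partition asserted by Fact~\ref{fact:triad-partition-subchordal}. A secondary point requiring care is the verification that the triangulating triads, which are produced inside $\uG'$, are bona fide triads of $\dsG(X)$ so that property P2 can be invoked; this is where one uses that $\uG'$ is a subgraph of $\uG(X)$ and that $X$ is bilateral, so that each edge of $\uG'$ carries a well-defined symmetric sign.
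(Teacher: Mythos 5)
Your proof is correct, but the key step is carried out by a genuinely different route than the paper's. The paper obtains the proposition as an immediate corollary of Lemma~\ref{lem:properties-subchordal-cycles}(iii) combined with Lemma~\ref{lem:cycle-to-two-faction}: that statement (every subchordal cycle is positive under triad-wise balance) is proved by induction on the cycle length, using Lemma~\ref{lem:properties-subchordal-cycles}(ii) to locate an ``ear'' triad $(i_{k-1},i_k,i_{k+1})$ of three consecutive cycle nodes, multiplying its positive sign product into that of the shortened subchordal cycle $(i_1,\dots,i_{k-1},i_{k+1},\dots,i_m)$, and cancelling the single chord $\{i_{k-1},i_{k+1}\}$. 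You instead invoke the full triangulation of Fact~\ref{fact:triad-partition-subchordal} and cancel all chords simultaneously via the incidence count (each boundary edge in exactly one triangle, each used chord in exactly two). The two arguments are the local and global versions of the same telescoping identity; yours is geometrically more transparent, but it relies on incidence counts that Fact~\ref{fact:triad-partition-subchordal} does not literally assert (it only claims existence of a non-overlapping partition into triads). You correctly flag this as the point needing justification; it does hold, either by a non-overlap-plus-covering argument for convex polygons or directly from the recursive construction in the proof of Fact~\ref{fact:triad-partition-subchordal}, in which the splitting chord becomes a boundary edge of each of the two sub-polygons and hence lies in exactly two triads. The paper's induction sidesteps this bookkeeping entirely. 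Your outer structure---Fact~\ref{fact:two-faction-implies-traid-wise} for the easy direction, strong connectivity of $\dsG(X)$ from bilaterality plus connectivity of $\uG(X)$, and Lemma~\ref{lem:cycle-to-two-faction} to pass from positivity of all cycles to the two-faction partition---matches the paper exactly.
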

\smallskip

The condition for the equivalence between triad-wise and two-faction balance can be further relaxed. Below we present the main theorem of this section and its proof is provided in Appendix~\ref{appendix:thm:equivalence-two-balance}.
\smallskip
\begin{theorem}[Equivalence between two definitions of structural balance]\label{thm:equivalence-two-balance}
Consider any undirected unsigned graph $\uG=(V,E)$ that is connected and satisfies the following conditions: For any maximal cyclic subgraph $\uG'$ with $m>3$ nodes,
\begin{enumerate}[topsep=-7pt,itemsep=2pt]
\item it contains a cycle $\C=(i_1,\dots, i_m)$ that passes through all the nodes in  $\uG'$ and is subchordal;
\item for any chord $\{i_p, i_q\}$ of $\C$ in $\uG$, with $q> p+1$, at least one of the two cycles $(i_1,\dots, i_p,i_q,\dots, i_m)$ and $(i_p, i_{p+1},\dots, i_q)$ in $\uG$ is subchordal.
\end{enumerate}
For any bilateral $X\in \{-1,0,1\}^{n\times n}$ with $\uG(X)=\uG$, $\dsG(X)$ satisfies triad-wise structural balance if and only if $\dsG(X)$ satisfies two-faction structural balance.
\end{theorem}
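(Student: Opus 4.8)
The plan is to prove the two implications separately, with essentially all the work going into the direction triad-wise $\Rightarrow$ two-faction. The converse is immediate: since $X$ is bilateral, Fact~\ref{fact:two-faction-implies-traid-wise} already yields that two-faction balance implies triad-wise balance, and it uses none of the hypotheses on $\uG$. For the forward direction I would first observe that bilaterality of $X$ together with connectivity of $\uG=\uG(X)$ makes $\dsG(X)$ strongly connected, since each undirected edge lifts to a pair of opposite directed links, so every undirected path yields directed paths both ways. Lemma~\ref{lem:cycle-to-two-faction} then reduces the goal to a single combinatorial claim: \emph{if $X$ is triad-wise balanced, then every cycle in $\dsG(X)$ is positive}. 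Because the sign of a cycle depends only on the signs of its edges, and every cycle lies in the induced subgraph of some maximal cyclic subgraph, it suffices to prove this claim one maximal cyclic subgraph $\uG'$ at a time; for $\uG'$ with at most three nodes the claim is exactly P1 (for $2$-cycles) and P2 (for triads), so the content is in maximal cyclic subgraphs with $m>3$ nodes.

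Second, I would isolate the two engines that drive positivity. By hypothesis~(i) such a $\uG'$ carries a subchordal spanning cycle $\C=(i_1,\dots,i_m)$, and by statement~(iii) of Lemma~\ref{lem:properties-subchordal-cycles} every subchordal cycle is positive under triad-wise balance; hence $\C$ is positive. The second engine is a chord-propagation step: given any chord $\{i_p,i_q\}$ of $\C$, Fact~\ref{fact:chord-split-cycle-into-2-cycles} splits $\C$ into the two cycles $(i_1,\dots,i_p,i_q,\dots,i_m)$ and $(i_p,\dots,i_q)$, and since the chord occurs once in each piece its sign squares out, so the product of the signs of the two pieces equals the sign of $\C$. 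Hypothesis~(ii) guarantees at least one piece is subchordal, hence positive by Lemma~\ref{lem:properties-subchordal-cycles}(iii); as $\C$ itself is positive, the complementary piece is forced positive as well. Thus \emph{both} chord-split pieces of $\C$ are positive.

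Third, I would run a strong induction on the cycle length $\ell$ to upgrade from $\C$ and its pieces to every cycle of $\uG'$. A $2$- or $3$-cycle is positive as above. For $\ell>3$: if the cycle $C$ has a chord in $\uG'$, Fact~\ref{fact:chord-split-cycle-into-2-cycles} splits it into two strictly shorter cycles of $\uG'$ whose signs multiply to that of $C$, and both are positive by the inductive hypothesis, so $C$ is positive. The remaining, and genuinely delicate, case is a \emph{chordless} cycle $C$ with $4\le\ell<m$ (it cannot span all of $\tilde{V}$, since a chordless spanning cycle would force $\uG_{\tilde{V}}$ to be a bare $m$-cycle and so preclude the subchordal spanning cycle promised by~(i)). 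Here the idea is that such a $C$ arises as one of the two pieces obtained by splitting a subchordal cycle along a chord: its positivity is then inherited from that positive subchordal cycle and its positive complementary piece, precisely through the chord-propagation step above, possibly after recursing into a smaller maximal cyclic subgraph whose own spanning cycle is subchordal by~(ii).

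Finally, once every cycle of every maximal cyclic subgraph is shown positive, every cycle of $\dsG(X)$ is positive, and Lemma~\ref{lem:cycle-to-two-faction} delivers two-faction balance, closing the equivalence. I expect the main obstacle to be exactly the chordless case of the induction: plain length-induction breaks down because a chordless cycle offers no chord to split on, and making rigorous the claim that every such cycle is realized as a chord-split piece of a subchordal cycle---and that the relevant sub-configuration still satisfies~(i)--(ii), so the recursion is well-founded---is where conditions~(i) and~(ii) must be used in full and where the bookkeeping is heaviest.
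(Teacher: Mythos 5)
Your reduction (bilateral $+$ connected $\Rightarrow$ strongly connected, then Lemma~\ref{lem:cycle-to-two-faction} reduces everything to showing every cycle is positive) and your ``chord-propagation'' engine (the spanning subchordal cycle $\C$ of a maximal cyclic subgraph is positive by Lemma~\ref{lem:properties-subchordal-cycles}(iii), and condition~(ii) forces \emph{both} pieces of any chord-split of $\C$ to be positive) are exactly the right ingredients and match the paper. The gap is in how you assemble them. Your strong induction on cycle length stalls precisely where you say it does, at chordless cycles of length at least $4$, and the repair you sketch --- that every such cycle ``arises as one of the two pieces obtained by splitting a subchordal cycle along a chord'' --- is not justified and is not true in the form stated: a cycle inside the maximal cyclic subgraph can use \emph{two or more} chords of $\C$ together with several arcs of $\C$ (e.g.\ an alternation chord--arc--chord--arc), and such a cycle is not a single chord-split piece of $\C$, nor need it have a chord of its own. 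So the crux of the forward direction is left unproven, and the proposed mechanism for it does not cover all cases.

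The paper sidesteps this entirely with an edge-wise (rather than cycle-wise) argument that you already have all the tools for. Fix the maximal cyclic subgraph with subchordal spanning cycle $\C'=(j_1,\dots,j_s)$, set $y_{j_1}=1$ and $y_{j_k}=X_{j_1j_2}\cdots X_{j_{k-1}j_k}$, so that $X_{j_kj_{k+1}}=y_{j_k}y_{j_{k+1}}$ for every edge of $\C'$ (the wrap-around edge uses positivity of $\C'$). Every remaining edge between nodes of the subgraph is a chord $\{j_p,j_q\}$ of $\C'$, and your chord-propagation step, applied to whichever split piece condition~(ii) certifies as subchordal, pins down its sign as $X_{j_pj_q}=y_{j_p}y_{j_q}$. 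Once \emph{every} edge satisfies $X_{uv}=y_uy_v$, the sign of an arbitrary cycle telescopes to $\prod_r y_{i_r}^2=1$ --- no induction on cycle length and no case distinction on chordlessness is needed. I recommend replacing your third paragraph with this labeling argument; as written, the proposal does not constitute a proof.
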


In Figure~\ref{fig:theorem-equivalence-balance} we provide examples in which the undirected unsigned graphs $\uG$ satisfy (or not resp.) the conditions in Theorem~\ref{thm:equivalence-two-balance} and, as the consequences, triad-wise balance in any $\dsG(X)$ with $\uG(X)=\uG$ always (or not resp.) always implies two-faction balance.

\begin{figure}
\begin{center}
\includegraphics[width=1\linewidth]{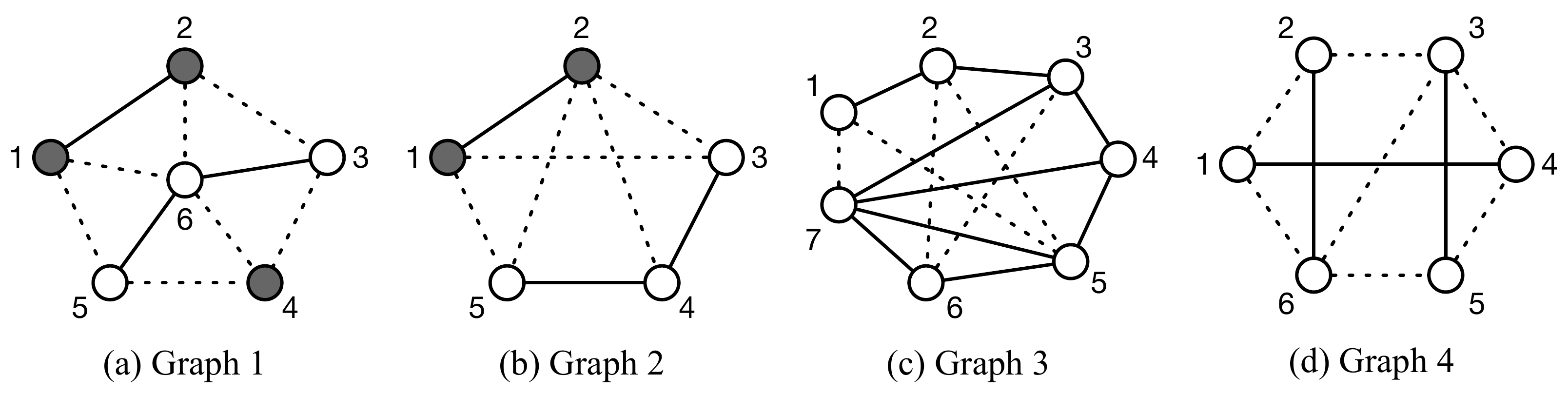}
\end{center}
\caption{Examples where triad-wise structural balance is or is not equivalent to two-faction structural balance. In all these directed signed graphs the links are bilateral and sign-symmetric, and the arrows of links are thereby omitted. Solid lines represent positive bilateral links while dashed lines represent negative bilateral links. In panel (a), the corresponding undirected unsigned graph satisfies teh conditions in Proposition~\ref{prop:all-cycle-subchordal}. In panel (b), the corresponding undirected unsigned graph satisfies all the conditions in Theorem~\ref{thm:equivalence-two-balance}, but does not satisfies the conditons in Propisition~\ref{prop:all-cycle-subchordal} since the cycle $(1,3,4,5)$ is not subchordal. Graph 1 and Graph 2 are examples where the directed signed appraisal networks satisfy both triad-wise and two-faction structural balance. The grey nodes form one faction and the other nodes form another faction. In panel (c), the corresponding undirected unsigned graph $\uG$ itself is maximal cyclic but not subchordal. As a result, Graph 3 satisfies triad-wise structural balance but its nodes can not be partitioned into two factions. In panel (d), the cycle $(1,2,\dots,6)$ in the corresponding undirected unsigned graph is subchordal but the link $\{1,4\}$ violates condition~ii) in Theorem~\ref{thm:equivalence-two-balance}. As a result, Graph 4 is triad-wise balanced but not two-faction balanced.}\label{fig:theorem-equivalence-balance}
\end{figure}

\section{Convergence of Appraisal Networks to Non-All-to-All Structural Balance}

\subsection{Convergence to triad-wise structural balance}

In this subsection we propose and analyze a discrete-time gossip-like model that characterizes how appraisal networks evolves to triad-wise structural balance with non-all-to-all topologies via local interactions. This model is built on three sociologically intuitive mechanisms: 1) the symmetry mechanism~\cite{RME:76}, i.e., individuals tend to be friendly (unfriendly resp.) to those who are friendly (unfriendly resp.) to themselves; 2) the influence mechanism~\cite{NEF-ECJ:11}, i.e., any individual $i$'s appraisal of individual $j$ is influenced by individual $i$'s friends' and enemies' appraisals of individual $j$; 3) the homophily mechanism~\cite{PFL-RKM:54}, i.e., individual $i$ tends to be friendly to individual $j$ if they have similar appraisals of others. Such a model is referred to as the \emph{symmetry-influence-homophily} (SIH) dynamics, formally defined as follows.
\smallskip
\begin{definition}[SIH dynamics]\label{def:SIHdynamics}
Given any initial appraisal matrix $X(0)\in \{-1,0,1\}^{n\times n}$, the evolution of $X(t)$ is defined by the following stochastic process: At each time step $t\in \mathbb{N}$, randomly pick a pair of nodes $i$ and $j$, with $i\neq j$, such that at least one of $X_{ij}(t)$ and $X_{ji}(t)$ is non-zero.
\begin{enumerate}[topsep=-7pt,itemsep=2pt]
\item If there does not exists any $k\in V\setminus \{i,j\}$ such that $X_{ik}(t)X_{jk}(t)\neq 0$, then update $X_{ij}(t)$ according to the symmetry mechanism, i.e.,  
\begin{align*}
X_{ij}(t+1) = X_{ji}(t);
\end{align*}
\item If there exists $k\in V\setminus \{i,j\}$ such that $X_{ik}(t)X_{jk}(t)\neq 0$, i.e., if $i$ and $j$ have a common neighbor, then randomly pick such a common neighbor $k$ and let
\begin{align*}
X_{ij}(t+1) = 
\begin{cases}
\displaystyle X_{ji}(t),\quad & \text{with probability }p_1,\\
\displaystyle X_{ik}(t)X_{kj}(t) \quad & \text{with probability }p_2,\\
\displaystyle X_{ik}(t)X_{jk}(t) \quad & \text{with probability }p_3,
\end{cases}
\end{align*} 
for some $p_1>0$, $p_2>0$, and $p_3>0$ with $p_1+p_2+p_3=1$. These three equations above correspond to the symmetry mechanism, the influence mechanism, and the homophily mechanism respectively.
\end{enumerate}
All the other links remain unchanged from $t$ to $t+1$.
\end{definition}  

\begin{facts}\label{fact:SIH-unchanged-topology}
Along the SIH dynamics, if $\dsG(X(T))$ is bilateral for some $T\in \mathbb{N}$, then $\dsG(X(t))$ is bilateral and $\uG(X(t))=\uG(X(t))$ for any $t\ge T$. 
\end{facts}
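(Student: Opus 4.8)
The plan is to establish both conclusions — preservation of bilaterality and invariance of the undirected topology (reading the claim $\uG(X(t))=\uG(X(t))$ as the intended $\uG(X(t))=\uG(X(T))$) — by induction on $t\ge T$, the base case $t=T$ holding by hypothesis. It then suffices to carry out a single inductive step: assuming $\dsG(X(t))$ is bilateral and $\uG(X(t))=\uG(X(T))$, I would show the same at $t+1$.

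First I would record the crucial structural feature of the update in Definition~\ref{def:SIHdynamics}: in every case exactly one directed entry, namely $X_{ij}$, is modified, while every other entry (in particular $X_{ji}$) is left untouched. The selection rule requires at least one of $X_{ij}(t),X_{ji}(t)$ to be nonzero; combined with the inductive bilaterality hypothesis this forces \emph{both} $X_{ij}(t)\neq 0$ and $X_{ji}(t)\neq 0$. Consequently $X_{ji}(t+1)=X_{ji}(t)\neq 0$, so to verify bilaterality of the pair $\{i,j\}$ it remains only to check $X_{ij}(t+1)\neq 0$.

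The core verification is that the update never returns $0$. In the symmetry branch (case (i), and the first line of case (ii)) we have $X_{ij}(t+1)=X_{ji}(t)\neq 0$. In the influence and homophily branches a common neighbor $k$ has been selected, so $X_{ik}(t)X_{jk}(t)\neq 0$; bilaterality then also gives $X_{kj}(t)\neq 0$, whence each of $X_{ik}(t)X_{kj}(t)$ and $X_{ik}(t)X_{jk}(t)$ is a product of two elements of $\{-1,1\}$ and hence lies in $\{-1,1\}$. Thus $X_{ij}(t+1)\neq 0$ in all three branches. Since every pair other than $\{i,j\}$ has both of its directed entries unchanged, bilaterality holds for all pairs, so $\dsG(X(t+1))$ is bilateral.

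For the topology I would use $E(X(t))=\{\{i,j\}\mid X_{ij}(t)\neq 0\text{ or }X_{ji}(t)\neq 0\}$ and note that the only entry that changed, $X_{ij}$, satisfied $X_{ij}(t)\neq 0$ before and $X_{ij}(t+1)\neq 0$ after; hence the edge $\{i,j\}$ is present at both $t$ and $t+1$, while no other edge's membership can change. Therefore $E(X(t+1))=E(X(t))=E(X(T))$ and $\uG(X(t+1))=\uG(X(T))$, closing the induction. I expect no serious obstacle: the argument is entirely local to the single updated pair, and the only point demanding genuine care is that the selection rule together with bilaterality upgrades the ``at least one nonzero'' condition to ``both nonzero,'' which is precisely what prevents the symmetry branch from deleting a link.
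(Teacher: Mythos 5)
Your proof is correct. The paper states this as a Fact without providing any proof in the appendix (it is treated as an immediate consequence of Definition~\ref{def:SIHdynamics}), and your induction — using bilaterality to upgrade the selection rule's ``at least one nonzero'' to ``both nonzero,'' then checking that each of the three update branches returns a value in $\{-1,1\}$ — is exactly the routine verification the paper omits, including the correct reading of the typo $\uG(X(t))=\uG(X(t))$ as $\uG(X(t))=\uG(X(T))$.
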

\smallskip

Below we define the equilibria of the SIH dynamics. 
\smallskip
\begin{definition}[Equilibrium]\label{def:SIH-equilibrium}
An appraisal matrix $X\in \{-1,0,1\}^{n\times n}$, or equivalently an appraisal network $\dsG(X)$, is an equilibrium of the SIH dynamics, if $\dsG(X)$ remains unchanged after any possible update described in Definition~\ref{def:SIHdynamics} of any pair of its nodes.  
\end{definition}
The proposition below characterizes the equilibrium set of the SIH dynamics. The proof is in Appendix~\ref{appendix:prop:SIH-equilibrium-set}.
\smallskip
\begin{proposition}[Equilibrium set]\label{prop:SIH-equilibrium-set}
For any appraisal matrix $X\in \{-1,0,1\}^{n\times n}$, $X$ is an equilibrium of the SIH dynamics if and only if $\dsG(X)$ is triad-wise balanced.
\end{proposition}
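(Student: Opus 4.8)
The plan is to prove the two implications separately. The key preliminary observation is that, because $p_1,p_2,p_3>0$, being an equilibrium is equivalent to requiring that \emph{every} admissible update leave $X$ fixed: for each ordered pair $(i,j)$ with $X_{ij}\neq 0$ the symmetry update must act as the identity, and, whenever $i$ and $j$ share a common neighbor $k$, the influence and homophily updates associated with \emph{every} such $k$ must act as the identity as well. Throughout I would use that nonzero entries lie in $\{-1,1\}$, so that $X_{ab}^2=1$ whenever $X_{ab}\neq 0$.

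To show that triad-wise balance implies equilibrium, I would fix an eligible ordered pair $(i,j)$, so $X_{ij}\neq 0$. Property P1 gives $X_{ij}X_{ji}>0$, hence $X_{ji}=X_{ij}$, so the symmetry update is the identity, which settles Case (i). In Case (ii), for any common neighbor $k$ one has $X_{ik},X_{jk}\neq 0$, so by P1 also $X_{ki},X_{kj}\neq 0$ and $(i,j,k)$ is a triad; property P2 then gives $X_{jk}X_{ki}=X_{ij}$, and substituting $X_{ki}=X_{ik}$ and $X_{kj}=X_{jk}$ shows that both the influence output $X_{ik}X_{kj}$ and the homophily output $X_{ik}X_{jk}$ equal $X_{ij}$. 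Thus no admissible update changes $X$, so $X$ is an equilibrium.

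For the converse, assuming $X$ is an equilibrium, I would first take any $(i,j)$ with $X_{ij}\neq 0$. Since the symmetry mechanism is always admissible---it is the sole update in Case (i) and has probability $p_1>0$ in Case (ii)---it must fix $X_{ij}$, forcing $X_{ij}=X_{ji}$; this yields $X_{ij}X_{ji}=X_{ij}^2>0$, establishing P1 and the bilaterality of $\dsG(X)$. I would then pick any triad $(i,j,k)$: because $X_{jk}\neq 0$ and, by bilaterality, $X_{ik}\neq 0$, node $k$ is a common neighbor of $i$ and $j$, so the influence update for this $k$ (probability $p_2>0$) must fix $X_{ij}$, giving $X_{ij}=X_{ik}X_{kj}$. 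Applying P1 ($X_{ik}=X_{ki}$, $X_{kj}=X_{jk}$) and multiplying through by $X_{ij}$ yields $X_{ij}X_{jk}X_{ki}=1>0$, which is exactly P2. Hence $\dsG(X)$ is triad-wise balanced.

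I do not expect a deep obstacle here, since the argument is essentially a sign-by-sign verification. The only care needed is the bookkeeping of signs via $X_{ab}^2=1$ on the support together with the symmetry $X_{ab}=X_{ba}$ granted by P1, and the correct reading of the equilibrium condition: because all three probabilities are strictly positive, it must be tested against each mechanism and against every admissible common neighbor, and it is precisely this ``for every $k$'' quantifier that converts the influence (equivalently homophily) fixed-point equation into the positive-triad property P2.
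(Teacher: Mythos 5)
Your proposal is correct and follows essentially the same route as the paper's proof: both characterize an equilibrium as a matrix fixed by every admissible symmetry, influence, and homophily update (for every admissible common neighbor $k$), and then translate these fixed-point conditions directly into properties P1 and P2 of Definition~\ref{def:triad-wise-balance}. You merely spell out the sign bookkeeping that the paper compresses into its final ``apparently'' step.
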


Given any initial condition $X(0)\in \{-1,0,1\}^{n\times n}$, the trajectory $X(t)$ along the SIH dynamics is a stochastic process, in which the randomness comes from the update sequence, including which link is updated and which update mechanism is triggered at each time step. We establish the almost-sure convergence of the SIH dynamics by showing that, for any initial condition, there exists at least one finite update sequence along which the trajectory achieves triad-wise structural balance. This argument is formalized as a lemma and is presented in Appendix~\ref{appendix:thm:convergence-triad-wise}. A similar proof strategy has been adopted in~\cite{GC-WS-WM-FB:18n}. Note that a manual update sequence is one path to almost sure convergence. However, the constructed sequence in the proof is not unique, and there may be other paths to convergence.

Now we present the main theorem on the almost-sure convergence of the SIH dynamics to triad-wise structural balance. The proof is given in Appendix~\ref{appendix:thm:convergence-triad-wise}.
\smallskip
\begin{theorem}[Convergence to triad-wise structural balance]\label{thm:convergence-to-traid-wise-balance}\label{thm:convergence-triad-wise-balance}
Consider the SIH dynamics given by Definition~\ref{def:SIHdynamics}. For any initial condition $X(0)\in\{-1,0,1\}^{n\times n}$, the trajectory $X(t)$ almost surely reaches an equilibrium, i.e., a triad-wise balanced configuration, in finite time. 
\end{theorem}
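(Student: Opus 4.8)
The plan is to reduce the almost-sure statement to a purely combinatorial reachability claim and then invoke a standard finite-state Markov-chain argument. Since $X(t)$ is a Markov chain on the finite state space $\{-1,0,1\}^{n\times n}$ and, by Proposition~\ref{prop:SIH-equilibrium-set}, its absorbing states are exactly the triad-wise balanced configurations, it suffices to prove the following reachability lemma: from every $X\in\{-1,0,1\}^{n\times n}$ there is a finite update sequence, each step of which has strictly positive probability under the dynamics of Definition~\ref{def:SIHdynamics}, that drives $X$ to a triad-wise balanced matrix. Given such a lemma, the finiteness of the state space yields a uniform horizon $N$ and a uniform lower bound $\varepsilon>0$ on the probability of reaching an equilibrium within $N$ steps from any state; hence the probability of avoiding equilibrium for $kN$ steps is at most $(1-\varepsilon)^k\to 0$, and since equilibria are absorbing the trajectory is absorbed in finite time almost surely.

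To construct the sequence I would proceed in two stages. First, a preprocessing stage establishes property P1 and freezes the topology: for every pair with exactly one nonzero directed entry I apply the symmetry mechanism to copy the nonzero value into the empty slot, and for every pair carrying two opposite signs I apply symmetry to equalize them. Each such step has positive probability (the symmetry branch is taken with probability $p_1>0$, or is forced in case (i) of Definition~\ref{def:SIHdynamics}), and none of them changes the undirected edge set $E(X)$; after finitely many of them $\dsG(X)$ is bilateral and sign-symmetric, so by Fact~\ref{fact:SIH-unchanged-topology} the graph $\uG(X)$ is fixed for the remainder of the construction. The key structural observation is that the sign of an edge $\{i,j\}$ can only be altered when $i$ and $j$ share a common neighbor, i.e.\ only edges lying in some triad are modifiable; edges in no triad are irrelevant to Definition~\ref{def:triad-wise-balance} and may be left at their symmetrized signs. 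Thus it remains to re-sign the triad-edges so that property P2 holds.

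For the second stage I would work on each connected component of $\uG(X)$ separately and fix a spanning tree $\mathcal{T}$ rooted at an arbitrary node $r$. Keeping all tree edges at their symmetrized signs, I define node labels $\sigma_r=1$ and, for each $v$, let $\sigma_v$ be the product of the signs of the tree edges along the tree path from $r$ to $v$, so that every tree edge $\{i,j\}$ satisfies $X_{ij}=\sigma_i\sigma_j$. The target configuration sets every triad-edge $\{i,j\}$ to $\sigma_i\sigma_j$; then every triad $(i,j,k)$ has product $\sigma_i\sigma_j\cdot\sigma_j\sigma_k\cdot\sigma_k\sigma_i=1>0$, so the target is triad-wise balanced. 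To reach it I flood the non-tree triad-edges: to set such an edge $\{i,j\}$ to $\sigma_i\sigma_j$ I select a triad $(i,j,k)$ whose other two edges $\{i,k\},\{j,k\}$ already carry their target signs, apply the homophily (or influence) branch so that $X_{ij}\leftarrow X_{ik}X_{jk}=\sigma_i\sigma_k\cdot\sigma_j\sigma_k=\sigma_i\sigma_j$, and then restore symmetry with one symmetry step on $(j,i)$. Every branch used has positive probability, so the sequence is admissible.

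The main obstacle is precisely the existence of an ordering of the non-tree triad-edges under which, at each step, the chosen supporting triad already has its two other edges at their target signs. This is an elimination-type ordering: the tree edges are correct from the outset, and one must show that the set of correctly-signed triad-edges can be enlarged, one edge at a time, until it covers all triad-edges. I expect this to be the technical heart of the appendix lemma; I would establish it by an induction driven by connectivity (for instance processing the non-tree edges of a depth-first spanning tree in order of increasing endpoint distance, so that each triad used to fix an edge is supported by shorter, already-processed edges), and where a direct monotone ordering is unavailable one may perform auxiliary homophily updates that temporarily re-sign intermediate edges, in the spirit of Gaussian elimination over $\mathrm{GF}(2)$ applied to the triad constraints. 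Once this ordering is secured, concatenating the preprocessing steps and the flooding steps over all components yields the required positive-probability path to a triad-wise balanced equilibrium, which completes the reduction and hence the proof.
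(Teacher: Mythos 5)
Your overall architecture is sound and matches the paper: reduce to a finite absorbing Markov chain (the paper's Lemma~\ref{lem:randomness-to-control}), and then exhibit, from every state, one positive-probability finite update sequence reaching an equilibrium; the symmetrization preprocessing is also exactly what the paper does. The gap is in your second stage. You commit to reaching one \emph{specific} balanced target, namely $X_{ij}=\sigma_i\sigma_j$ for all triad-edges with $\sigma$ defined by a spanning tree, and this target is in general unreachable. Two concrete failures: (a) the labels $\sigma_v$ are built from tree edges that need not lie in any triad, hence are immutable under influence/homophily and can force a target sign on a triangle that contradicts every configuration reachable within that triangle (e.g.\ a triangle $\{1,2,3\}$ attached to the rest of the graph only through pendant paths: once the triangle is balanced it is frozen, and only the balanced configurations differing from the initial one in at most one edge are reachable, so a tree-induced target with a different sign pattern cannot be attained); (b) the elimination ordering you identify as ``the technical heart'' does not exist in general --- if a triangle's three edges are all non-tree edges and each lies in no other triad, then fixing any one of them requires the other two to already be at target, a circular dependency. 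Your fallback (``auxiliary homophily updates \dots in the spirit of Gaussian elimination over $\mathrm{GF}(2)$'') is not an argument, and since the target itself can be unreachable, no amount of reordering rescues it.

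The paper avoids all of this by not prescribing a target: it uses the potential $h(X)=\sum_{i,j}\vect{1}_{\{X_{ij}<0\}}$. After symmetrization, any violating triad $(i,j,k)$ has $X_{ij}X_{jk}X_{ki}=-1$, so it contains a negative edge $(i,j)$ with $X_{ik}X_{kj}=X_{ik}X_{jk}=+1$ (this holds whether the triad has one or three negative edges); one influence or homophily update with common neighbor $k$ flips that edge to $+1$, and a symmetry step restores sign-symmetry. Thus $h$ strictly decreases at every step of the second stage, the constructed sequence terminates within $n(n-1)$ steps, and termination occurs precisely when no asymmetric pair and no unbalanced triad remains, i.e.\ at a triad-wise balanced equilibrium. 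If you replace your spanning-tree stage with this monotone argument, the rest of your proof (the reachability-implies-absorption step) goes through unchanged.
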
 
\smallskip

\begin{remark}
The intuition behind the convergence to triad-wise is clear: The symmetry mechanism makes the links of the appraisal network bilateral and sign-symmetric, while the influence and homophily mechanism balance each triad. In addition, since the influence mechanism and the homophily mechanism are identical if the appraisal matrix is sign-symmetric, from the proof in Appendix~\ref{appendix:thm:convergence-triad-wise}, one can observe that the almost-sure finite-time convergence of the SIH dynamics to triad-wise structural balance still holds if either the influence mechanism or the homophily mechanism is removed from the dynamics. However, they cannot be both removed, since the symmetry mechanism alone does not balance sign-symmetric but negative triads.
\end{remark}

\subsection{Convergence to two-faction structural balance}

In this subsection, we discuss the convergence of appraisal networks to two-faction structural balance. Regarding the SIH dynamics, the following fact is a straightforward consequence of Theorem~\ref{thm:equivalence-two-balance} and~\ref{thm:convergence-to-traid-wise-balance}.
\smallskip
\begin{facts}\label{fact:SIH-conv-two-faction}
For any bilateral $X(0)\in \{-1,0,1\}^{n\times n}$ such that $\uG(X(0))$ satisfies the conditions for $\uG$ in Theorem~\ref{thm:equivalence-two-balance}, $\dsG(X(t))$ along the SIH dynamics almost surely reach in finite time an equilibrium satisfying two-faction structural balance.
\end{facts}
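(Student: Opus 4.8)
The plan is to obtain this statement as a direct composition of the almost-sure finite-time convergence of Theorem~\ref{thm:convergence-to-traid-wise-balance} with the equivalence criterion of Theorem~\ref{thm:equivalence-two-balance}, using the topology-preservation property of Fact~\ref{fact:SIH-unchanged-topology} to bridge the two. In other words, the proof is not a fresh argument but a careful chaining of already-established results, and the work lies in verifying that the hypotheses line up correctly along the random trajectory.

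First I would invoke Theorem~\ref{thm:convergence-to-traid-wise-balance}. Since it holds for \emph{any} initial condition, and in particular for the bilateral $X(0)$ under consideration, the trajectory $X(t)$ almost surely reaches in finite time a (random) equilibrium $X^\star$ that, by Proposition~\ref{prop:SIH-equilibrium-set}, is triad-wise balanced. Let $T^\star$ denote the corresponding almost-surely-finite hitting time, so that $X^\star = X(T^\star)$.

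The second step is to certify that the reached equilibrium carries the correct underlying undirected topology, so that Theorem~\ref{thm:equivalence-two-balance} is applicable to it. Because $X(0)$ is bilateral, Fact~\ref{fact:SIH-unchanged-topology} guarantees that $\dsG(X(t))$ stays bilateral for all $t\ge 0$ and that the undirected unsigned graph $\uG(X(t))=\uG(X(0))$ is invariant along the entire trajectory. In particular $X^\star$ is bilateral and $\uG(X^\star)=\uG(X(0))$, which by hypothesis satisfies all the conditions imposed on $\uG$ in Theorem~\ref{thm:equivalence-two-balance}. Applying that theorem to the bilateral matrix $X^\star$ then forces the triad-wise balanced $X^\star$ to be two-faction balanced as well. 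Hence, on the almost-sure event that triad-wise balance is attained at the finite time $T^\star$, the configuration $X(T^\star)$ is simultaneously a triad-wise equilibrium and two-faction balanced, which is exactly the claim.

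The only point requiring care---rather than a genuine obstacle---is the ordering of quantifiers over the random trajectory: the equilibrium that is hit is itself random, so one must ensure the equivalence criterion applies to \emph{whichever} triad-wise equilibrium is reached, not merely to some fixed one. This is precisely what Fact~\ref{fact:SIH-unchanged-topology} secures, since it renders the relevant hypothesis---the underlying undirected topology---deterministic and invariant, independent of the realized update sequence and of $T^\star$. With that invariance in hand, the composition of the three cited results is immediate and no probabilistic estimation beyond Theorem~\ref{thm:convergence-to-traid-wise-balance} is needed.
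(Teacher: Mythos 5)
Your proposal is correct and follows exactly the route the paper intends: the paper states this fact as an immediate consequence of Theorems~\ref{thm:equivalence-two-balance} and~\ref{thm:convergence-to-traid-wise-balance}, and your chaining of those two results via the topology-invariance in Fact~\ref{fact:SIH-unchanged-topology} is precisely the missing glue, correctly applied to the (random) equilibrium that is reached.
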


For any arbitrary $X(0)\in \{-1,0,1\}^{n\times n}$, the SIH dynamics do not always guarantee almost-sure convergence to two-faction structural balance, since in SIH dynamics the updates of appraisals are driven by only the imbalance in individuals' ego-networks, which eliminate triad-wise imbalance but not two-faction imbalance. In order to achieve almost-sure convergence to two-faction structural balance via local interactions, we need some ``additional information'' that is accessible from individuals' ego-networks and flows over the entire graph, e.g., individuals' opinions on a certain issue. According to the early works by Heider~\cite{FH:44,FH:46} and Cartwright et al.~\cite{DC-FH:56}, besides the person-person homophily, the person-opinion homophily also plays a role in shaping the interpersonal appraisals. This inspires us to consider the co-evolution between opinion dynamics and person-opinion homophily. Such a co-evolutionary model is presented as follows and referred to as the \emph{symmetry-influence-opinion-homophily} (SIOH) dynamics.
\smallskip

\begin{definition}[SIOH dynamics]\label{def:SIOH-dyn}
Denote by $y(t)\in \{-1,1\}^{n\times 1}$ the vector of the individuals' opinions on a certain issue at time $t$. Given any $X(0)\in \{-1,0,1\}^{n\times n}$ and $y(0)$, the SIOH dynamics of the appraisal matrix $X(t)$ and the opinion vector $y(t)$ are defined as follows. At any time step $t\in \mathbb{N}$, randomly pick a pair of nodes $i$ and $j$ such that $i\neq j$ and at least one of $X_{ij}(t)$ or $X_{ji}(t)$ is non-zero. Update $X_{ij}(t)$ or $y_i(t)$ according to the following rule: If $X_{ij}(t)=0$, then update $X_{ij}(t)$ via the symmetry mechanism, i.e., $X_{ij}(t+1)=X_{ji}(t)\neq 0$; If $X_{ij}(t)\neq 0$, then 
\begin{enumerate}[topsep=-7pt,itemsep=2pt]
\item With probability $q_1$, a gossip-like opinion dynamics is triggered and $y_i(t+1)=X_{ij}(t)y_j(t)$;
\item With probability $q_2$, a person-opinion homophily mechanism is triggered and $X_{ij}(t+1)=y_i(t)y_j(t)$;
\item With probability $q_3$, the original SIH dynamics as in Definition~\ref{def:SIHdynamics} are triggered. 
\end{enumerate}
Here $q_1,\, q_2,\, q_3$ are all positive and $q_1+q_2+q_3=1$. 
\end{definition}

Similar to the SIH dynamics, a pair $(X,y)$ is defined as an equilibrium of the SIOH if $X$ and $y$ remain unchanged after any possible update. The following theorem characterizes the dynamic behavior of the SIOH dynamics.
\smallskip
\begin{theorem}[Dynamic behavior of SIOH dynamics]\label{thm:SIOH-equilibrium-convergence}
Regarding the SIOH dynamics in Definition~\ref{def:SIOH-dyn}, the following statements hold:
\begin{enumerate}[topsep=-7pt,itemsep=2pt]
\item For any $X\in \{-1,0,1\}^{n\times n}$ and $y\in \{-1,1\}^{n\times 1}$, $(X,y)$ is an equilibrium of the SIOH dynamics if and only if the following two conditions hold: 1) $\dsG(X)$ is bilateral and satisfies two-faction structural balance; 2) $y_i=y_j$ if $i$ and $j$ are in the same faction, while $y_i=-y_j$ if $i$ and $j$ are in different factions;
\item For any initial appraisal matrix $X(0)\in \{-1,0,1\}^{n\times n}$ and initial opinion vector $y(0)\in \{-1,1\}^{n\times 1}$, the trajectory $(X(t),y(t))$ along the SIOH dynamics almost surely reaches an equilibrium in finite time. As a result, the appraisal network $\dsG(X(t))$ almost surely reaches a two-faction structurally balanced configuration in finite time. 
\end{enumerate}
\end{theorem}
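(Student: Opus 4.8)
The plan is to treat the two statements separately, establishing the equilibrium characterization of statement~(i) first and then using it as the target for an explicit reachability construction in statement~(ii).

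For statement~(i), I would force an equilibrium by examining each mechanism in Definition~\ref{def:SIOH-dyn} in turn. If some picked pair has $X_{ij}=0$ while $X_{ji}\neq 0$, the symmetry update sets $X_{ij}(t+1)=X_{ji}(t)\neq 0$, so stationarity forces $\dsG(X)$ to be bilateral. Given bilaterality, every picked pair falls into the $X_{ij}\neq 0$ branch; the opinion update being inactive forces $y_i=X_{ij}y_j$, equivalently $X_{ij}=y_iy_j$ on every link (using $y_j^2=1$), and the homophily update being inactive yields the same identity, while the SIH branch being inactive forces $\dsG(X)$ to be triad-wise balanced by Proposition~\ref{prop:SIH-equilibrium-set}. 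The single relation $X_{ij}=y_iy_j$ on all links is in fact the heart of the characterization: taking $V_1=\{i:y_i=1\}$ and $V_2=\{i:y_i=-1\}$ immediately yields two-faction balance together with the opinion-consistency condition (2), and conversely along any triad $X_{ij}X_{jk}X_{ki}=(y_iy_j)(y_jy_k)(y_ky_i)=1>0$, so it already implies triad-wise balance. For the converse direction I would simply verify that a state satisfying (1)--(2) is fixed by each of the three mechanisms: $X_{ij}=y_iy_j$ makes both $y_i\mapsto X_{ij}y_j=y_i$ and $X_{ij}\mapsto y_iy_j=X_{ij}$ trivial, and two-faction balance implies triad-wise balance (Fact~\ref{fact:two-faction-implies-traid-wise}), so the SIH branch is stationary by Proposition~\ref{prop:SIH-equilibrium-set}.

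For statement~(ii), the strategy mirrors the SIH case: since the joint state space $\{-1,0,1\}^{n\times n}\times\{-1,1\}^{n}$ is finite and equilibria are absorbing (by definition they are fixed by every update), it suffices to exhibit, from an arbitrary initial $(X(0),y(0))$, a finite positive-probability update sequence that terminates at an equilibrium. The key observation is that the opinion vector already provides a global two-colouring, so I would not propagate opinions at all but instead freeze $y$ and let homophily copy it onto the appraisals. Concretely: (a)~for every edge present in only one direction, apply the symmetry mechanism to the empty direction to make $\dsG(X)$ bilateral without altering the undirected support; (b)~once bilateral, sweep over every ordered link $(i,j)$ and trigger the homophily mechanism (probability $q_2>0$), setting $X_{ij}=y_iy_j$. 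Since neither (a)~nor~(b) changes $y$, after the sweep $X_{ij}=y_i(0)y_j(0)$ holds on every link, which by statement~(i) is exactly an equilibrium. Each step is a legal, strictly positive-probability transition, and the whole sequence has length at most of order $n^2$.

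Finally I would close with the standard finite-chain argument. The reachability construction has length uniformly bounded by some $L$ independent of the state, so there is $\delta>0$ with $\mathbb{P}\big(\text{absorption within }L\text{ steps}\mid\text{current state}\big)\ge\delta$ from every state; since equilibria are absorbing, the survival probability after $kL$ steps is at most $(1-\delta)^k$, giving almost-sure absorption in finite time and hence, by statement~(i), almost-sure convergence of $\dsG(X(t))$ to a two-faction balanced configuration. The only genuinely delicate point is the reachability step: one must check that freezing $y$ and using homophily really does land inside the equilibrium set, and this is precisely where the global information carried by the opinions does the work that the purely ego-local SIH dynamics cannot, and where statement~(i) must be invoked to certify the target. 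The probabilistic tail estimate itself is routine.
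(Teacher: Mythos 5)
Your proof is correct, and statement~(i) is handled essentially as in the paper: symmetry-invariance forces bilaterality, invariance under the person-opinion homophily branch forces $X_{ij}=y_iy_j$ on every link, the factions $V_1=\{i\,|\,y_i=1\}$ and $V_2=\{i\,|\,y_i=-1\}$ give two-faction balance and the opinion-consistency condition, and the converse is checked branch-by-branch using Fact~\ref{fact:two-faction-implies-traid-wise} and Proposition~\ref{prop:SIH-equilibrium-set}. For statement~(ii), however, your reachability construction is genuinely different from, and simpler than, the paper's. The paper first symmetrizes and then interleaves three repairs --- symmetry on sign-asymmetric links, person-opinion homophily on links with $X_{ij}=-1$ but $y_iy_j=1$, and gossip-like opinion updates on links with $X_{ij}=1$ but $y_i\neq y_j$ --- proving termination via the joint potential $h(X,y)=\sum_{i,j}\vect{1}_{\{X_{ij}<0\}}+\sum_i \vect{1}_{\{y_i<0\}}$, so that the opinions themselves are actively corrected along the way. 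You instead freeze $y$ entirely and, after symmetrizing, sweep the person-opinion homophily once over all ordered links, landing directly on the state $X_{ij}=y_i(0)y_j(0)$, which statement~(i) certifies as an equilibrium; this works because the homophily update overwrites each link unconditionally with $y_iy_j\neq 0$, so no potential function is needed and the sequence length is explicitly $O(n^2)$. Your route makes transparent that the opinion vector acts purely as a global two-colouring copied onto the appraisals (exactly the ``additional information'' the paper says the SIH dynamics lack), while the paper's route stays closer to the actual stochastic dynamics by never overwriting already-consistent links and by exhibiting opinion convergence explicitly. Both versions close with the same absorbing finite-Markov-chain argument underlying Lemma~\ref{lem:randomness-to-control}, so there is no gap.
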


\section{Numerical Studies}

In signed appraisal networks, since negative links represent antagonistic relations, the ratio of negative links among all the links intuitively reflects the \emph{degree of conflicts} in a social system. In this subsection, we investigate by numerical studies how degrees of conflicts in the final equilibrium states of the SIH dynamics are influenced by the initial degrees of conflicts and the structure of the appraisal networks. 

Given an initial appraisal matrix $X(0)$, the initial degree of conflicts, denoted by $c_0$, is defined as 
\begin{align*}
c_0= \sum_{i,j\in V} \vect{1}_{\{X_{ij}(0)<0\}}\,\,\Big{/}\sum_{i,j\in V} \vect{1}_{\{X_{ij}(0)\neq 0\}}.
\end{align*}
Here $\vect{1}_{\{\cdot\}}$ is the indicator function. Starting from $X(0)$, we know from Theorem~\ref{thm:convergence-to-traid-wise-balance} that the SIH dynamics almost surely reach an equilibrium at some time $T>0$. Denote by $c_{\infty}$ the final degree of conflicts, i.e., 
\begin{align*}
c_{\infty} = \sum_{i,j\in V} \vect{1}_{\{X_{ij}(T)<0\}}\,\,\Big{/}\sum_{i,j\in V} \vect{1}_{\{X_{ij}(T)\neq 0\}}.
\end{align*}
Given $X(0)$, $c_{\infty}$ is a random variable depending on the update sequence. As for the network structure, we first mainly focus on the link density. Given an appraisal network $\dsG(X)$, its link density $\rho_{\text{link}}$ is defined as 
\begin{align*}
\rho_{\text{link}}=\sum_{i,j\in V} \frac{\,\vect{1}_{\{X_{ij}\neq 0\}}\,}{n(n-1)},
\end{align*}
i.e., the number of links divided by the number of all the possible links between the nodes. In the SIH dynamics, since the zero pattern of the appraisal network remains unchanged, $\rho_{\text{link}}$ is constant once $X(0)$ is given.

We simulate the SIH dynamics on signed bilateral Erd\H{o}s-R\'{e}nyi random graphs with $n=8$ nodes. The initial appraisal networks are constructed as follows: For any pair of nodes $i,\,j$, links $(i,j)$ and $(j,i)$ are simultaneously built with probability $p$ and then, for any directed link $(i,j)$, its sign is flipped to $-1$ with some probability $p_{\text{neg}}$. Apparently, in such graphs, the expectation of link density $\rho_{\text{link}}$ is equal to $p$ and the expectation of the initial degree of conflicts is $p_{\text{neg}}$. In each simulation, once the initial condition $X(0)$ is constructed, we randomly generate valid updates until $X(t)$ reaches an equilibrium and then compute the final degree of conflicts. 

We first fix the value of $p$ and generate $X(0)$ with randomly picked values of $p_{\text{neg}}$ for 3000 times. That is, we generate 3000 different initial conditions $X(0)$ and $c_0$, from each of which we obtain the $c_{\infty}$. The scatter plots for the 3000 data pairs $(c_0,c_{\infty})$ under different fixed values of $p$ are presented in Figure~\ref{fig:final_conflict_initial_conflict}(a)-(d) respectively. Linear regressions between $c_0$ and $c_{\infty}$ are also conducted for each scatter plot. As indicated by Figure~\ref{fig:final_conflict_initial_conflict}, in general, $c_{\infty}$ has a tendency of increasing with $c_0$, which is quite intuitive, and, moreover, their correlation becomes weaker in denser networks, i.e., networks with larger $p$. Second, we fix the value of $p_{\text{neg}}$ and investigate the relation between $c_{\infty}$ and $\rho_{\text{link}}$ in a similar way based on 3000 times of independent simulations. Figure~\ref{fig:final_conflict_link_density}(a)-(d) show the simulation results on scatter plots between $c_{\infty}$ and $\rho_{\text{link}}$, as well as their corresponding linear regressions, under different fixed values of $p_{\text{neg}}$. As indicated by Figure~\ref{fig:final_conflict_link_density}(a)-(d), for networks with low initial degree of conflicts, $c_{\infty}$ increases with $\rho_{\text{link}}$, while, for networks with high initial degree of conflicts, $c_{\infty}$ decreases with $\rho_{\text{link}}$. Moreover, such correlations become less prominent when $p_{\text{neg}}$ is close to 0.5. To sum up, these aforementioned simulation results lead to clear interpretations: i) The more initial conflicts, the more final conflicts, especially in relatively sparse networks; ii) with low degree of initial conflicts, sparser networks lead to fewer final conflicts; On the contrary, with high degree of initial conflicts, denser networks lead to fewer final conflicts.

\begin{figure}
\begin{center}
\includegraphics[width=0.95\linewidth]{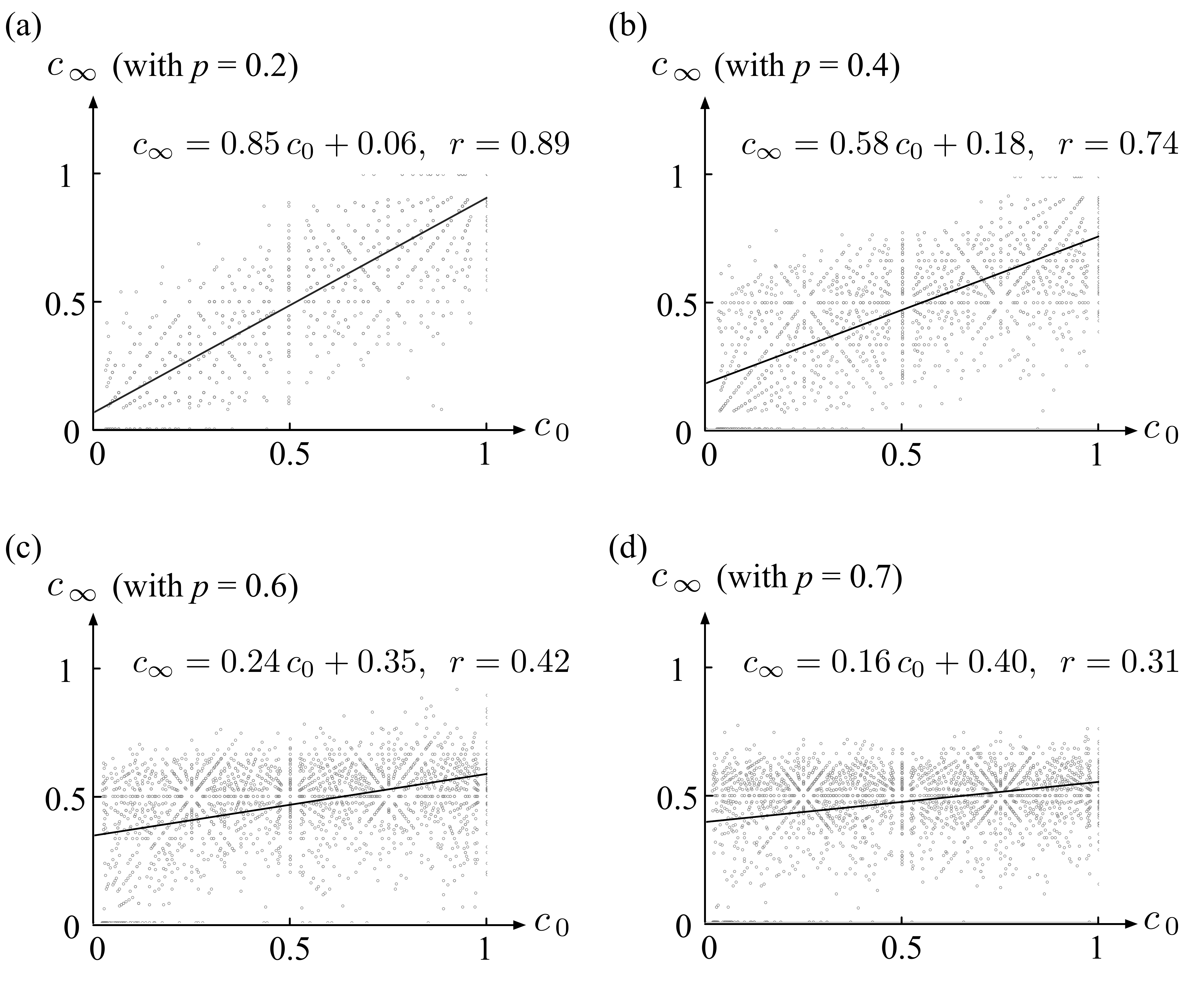}
\end{center}
\caption{Scatter plots between initial degrees of conflicts ($c_0$) and final degrees of conflicts ($c_{\infty}$) for the SIH dynamics on signed bilateral Erd\H{o}s-R\'{e}nyi graph with $p=0.2,\,0.4,\,0.6,\,0.7$ respectively. Each scatter plot contains 3000 data pairs $(c_0,c_{\infty})$. The linear equation shown in each figure is the linear regression result for each scatter plot, i.e., $c_{\infty}=k c_0+b$, and $r$ is the correlation coefficient between $c_{\infty}$ and $c_0$.}\label{fig:final_conflict_initial_conflict}
\end{figure} 

\begin{figure}
\begin{center}
\includegraphics[width=0.95\linewidth]{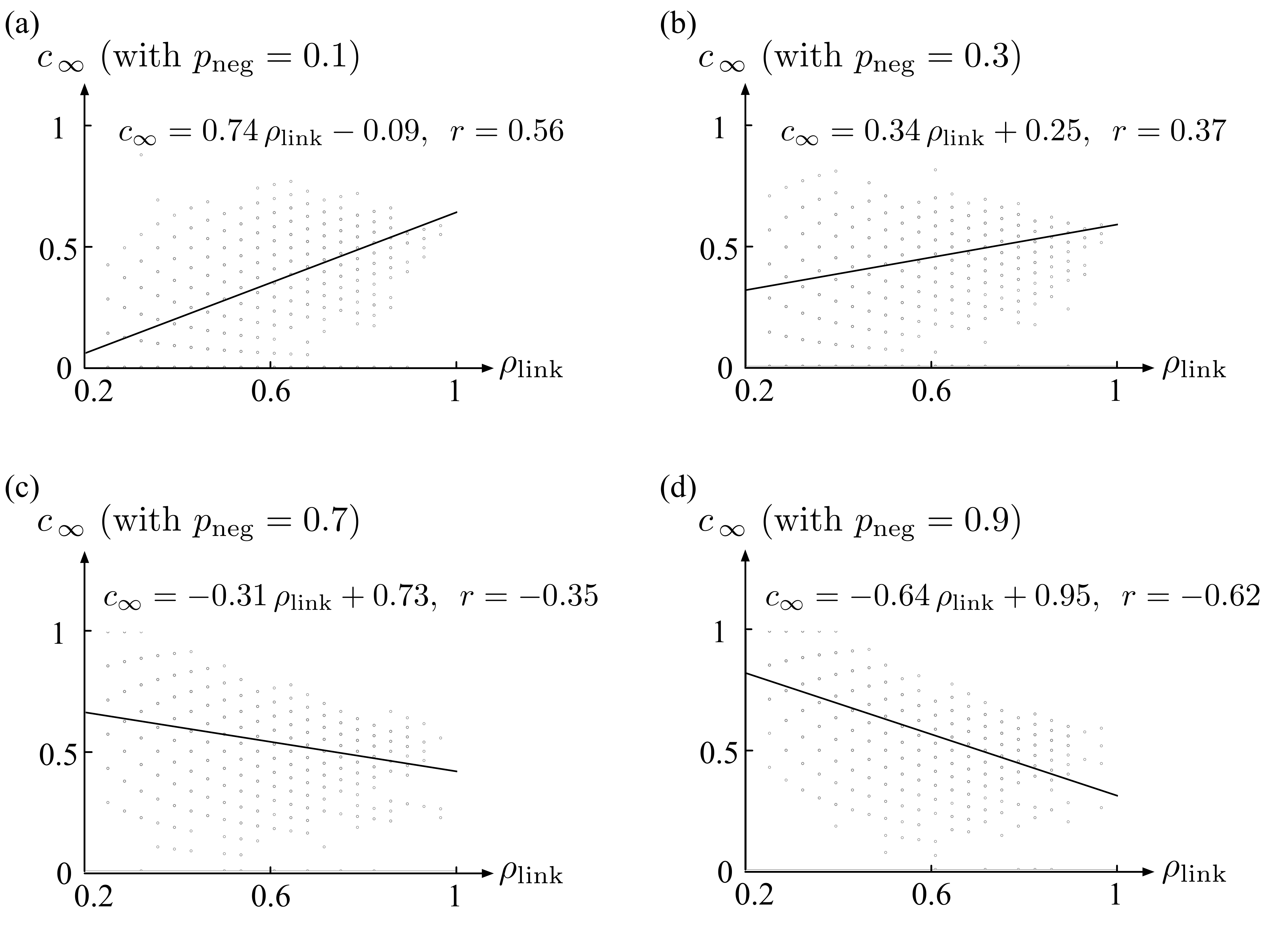}
\end{center}
\caption{Scatter plots between link densities ($\rho_{\text{link}}$) and final degrees of conflicts ($c_{\infty}$) for the SIH dynamics on signed bilateral Erd\H{o}s-R\'{e}nyi graph with $p_{\text{neg}}=0.1,\,0.3,\,0.7,\,0.9$ respectively. Each scatter plot contains 3000 data pairs $(\rho_{\text{link}},c_{\infty})$. The linear equation shown in each figure is the linear regression result for each scatter plot, i.e., $c_{\infty}=k\rho_{\text{link}}+b$, and $r$ is the correlation coefficient between $c_{\infty}$ and $\rho_{\text{link}}$.}\label{fig:final_conflict_link_density}
\end{figure} 

We further speculate that the effect of link density on final degree of conflicts might be related to the fact that denser networks contain more triads, which represent the ``multilateral relations'' between nodes in the appraisal network. Denote by $n_{\text{triad}}$ the number of triads in a given graph. In order to separate the effect of $n_{\text{triad}}$ from the effect of $\rho_{\text{link}}$, we simulate the SIH dynamics on independently randomly generated Erd\H{o}s-R\'{e}nyi with fixed values of $p$ and $p_{\text{neg}}$, and then obtain the scatter plots between $c_{\infty}$ and $n_{\text{triad}}$. As indicated by Figure~\ref{fig:final_conflict_num_triads}, the effect of the number of triads leads to very clear sociological interpretations: Multilateral relations are negatively correlated with the final degree of conflicts when the initial degree of conflicts is low, but, with high initial degrees of conflicts, multilateral relations are positively correlated with the final degree of conflicts. In addition, such correlation is more prominent in sparser networks.

\begin{figure}
\begin{center}
\includegraphics[width=0.99\linewidth]{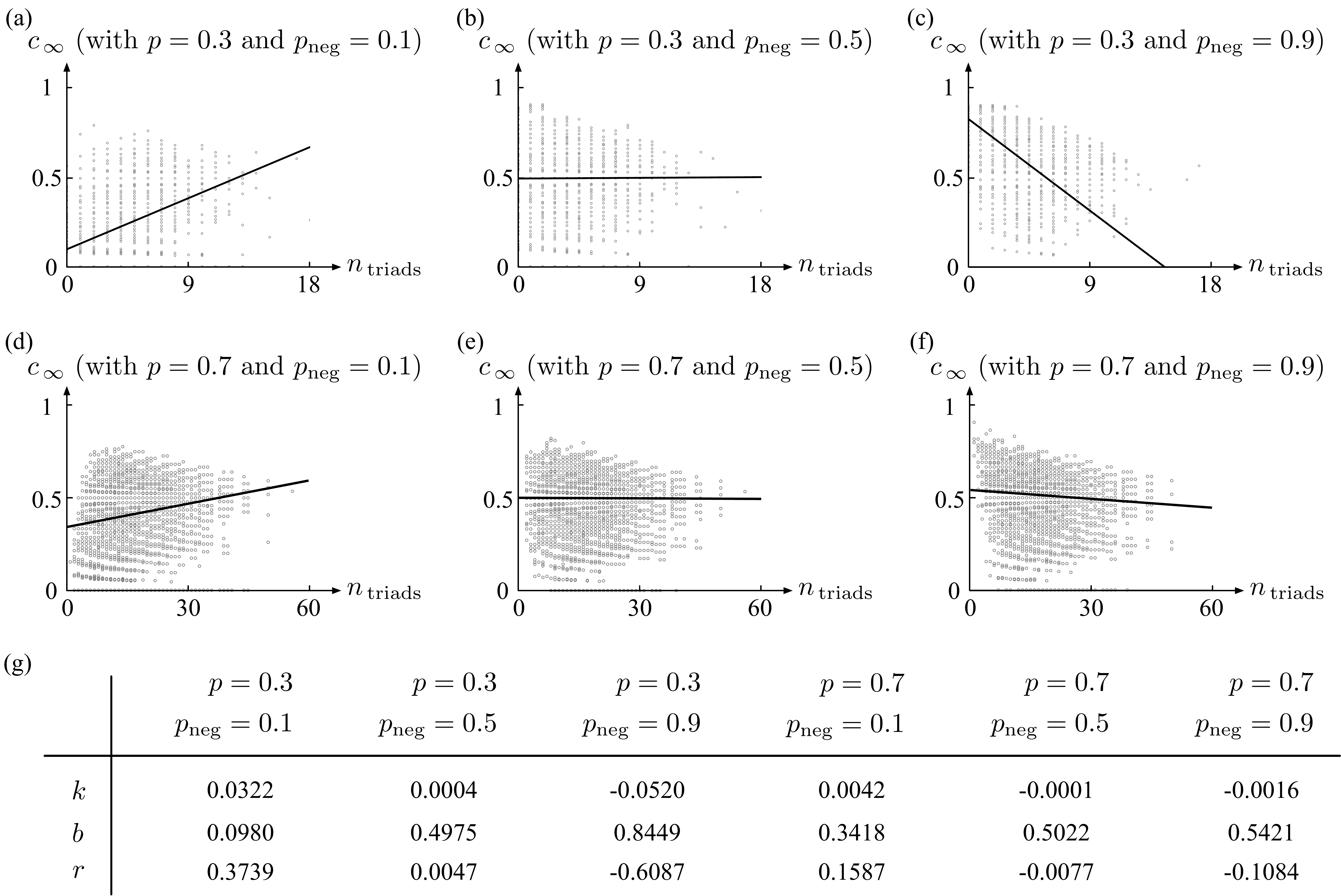}
\end{center}
\caption{Effect of the number of triads on the final degree of conflicts in the SIH dynamics. Panel~(a)-(f) are the scatter plots between the final degree of conflicts and the number of triads in the appraisal network, under different sets of values $p$ and $p_{\text{neg}}$. Each scatter plots contains 20000 data pairs $(c_{\infty},n_{\text{triads}})$ obtained by simulations on the Erd\H{o}s-R\'{e}nyi graphs independently randomly generated based on the given parameters $p$ and $p_{\text{neg}}$. Panel~(g) provides the mean-square estimations of the parameters $k$ and $b$ in the linear regressions $c_{\infty}=kn_{\text{triads}} + b$, and the correlation coefficients $r$ between $c_{\infty}$ and $n_{\text{triads}}$.}\label{fig:final_conflict_num_triads}
\end{figure}

\section{Conclusion}
This paper addresses the open problem how an interpersonal appraisal network converges to a non-all-to-all structural balance configuration. We first introduce two well-justified and intuitive definitions of non-all-to-all structural balance: the triad-wise structural balance and the two-faction structural balance, and establish the graph-theoretic conditions under which they are equivalent. We then propose two discrete-time gossip-like dynamics models of interpersonal appraisals based on symmetry, influence, and (person-to-person or person-opinion) homophily mechanisms, referred to as the SIH dynamics and the SIOH dynamics respectively. We conduct a comprehensive analysis of the dynamical behavior of these models. We prove that the set of equilibria of the SIH (SIOH resp.) dynamics is equal to the set of all the possible triad-wise (two-faction resp.) structural balance configurations. Moreover, we prove that for any initial condition, the appraisal networks in the SIH (SIOH resp.) dynamics achieves a triad-wise (two-faction resp.) structural balance configuration almost surely in finite time. Numerical studies on the SIH dynamics lead to some insightful sociological interpretations. For example, multilateral relations are negatively correlated with the final degree of conflicts when the initial degree of conflicts is low, but, with high initial degrees of conflicts, multilateral relations are positively correlated with the final degree of conflicts. In addition, such correlation is more prominent in sparser networks.

Future research directions include the investigation on the conditions for the convergence of the SIH dynamics to all-friendly appraisal networks, a special class of structural balance configurations. The necessary and sufficient graph-theoretic condition for the equivalence between triad-wise structural balance and two-faction structural balance is also an important open problem. It is also of research value to consider extending the applicability of the SIH dynamics to weighted signed graphs.
\bigskip

\begin{appendix}
\noindent\textbf{Appendix}
\vspace{-0.2cm}
\section{Proof of Proposition~\ref{prop:ego-net-balance}}\label{appendix:prop:ego-net-balance}
\emph{Proof:} Since $\dsG(X)$ is triad-wise balanced, we have $X_{ij}=X_{ji}$ for any $i,j\in V$. For any given $i\in V$, defined
\begin{align*}
V_i & = \{j\in N_i\,|\,X_{ij}=X_{ji}=1\}\cup \{i\},\\
\tilde{V}_i & = \{j\in N_i\,|\, X_{ij}=X_{ji}=-1\}.
\end{align*}
If $\tilde{V}_i=\phi$, then $V_i=N_i$. For any $j,k\in N_i\setminus \{i\}$, if $(j,k)$ is a link in $\dsG(X)$, then $(i,j,k)$ is a triad and thereby $X_{ij}X_{jk}X_{ki}=1$, which in turn leads to $X_{jk}=1$. Therefore, all the links in $i$'s ego-network are positive, i.e., $i$'s ego-network satisfies Definition~\ref{def:two-faction-balance}.

If $\tilde{V}_i\neq\phi$, due to the definition of $N_i$, we have $V_i\cap \tilde{V}_i=\phi$ and $N_i=V_i\cup \tilde{V}_i$. For any $j,k\in N_i\setminus \{i\}$, if $(j,k)\in \dE^+(X)\cup \dE^-(X)$, then $(i,j,k)$ is a triad in $\dsG(X)$ and is positive due to the triad-wise structural balance, i.e., 
\begin{align*}
X_{ij}X_{jk}X_{ki} = 1 \quad \Rightarrow \quad X_{jk}=X_{ij}X_{ki}.
\end{align*}
Therefore, by the definitions of $V_i$ and $\tilde{V}_i$,
\begin{align*}
X_{jk} = 
\begin{cases}
\displaystyle 1\quad &\text{if }j,k\in V_i,\text{ or }j,k\in \tilde{V}_i,\\
\displaystyle -1\quad & \text{if }j\in V_i,\,\,k\in \tilde{V}_i, \text{ or }j\in \tilde{V}_i,\,\, k\in V_i.
\end{cases}
\end{align*}
Moreover, since 
\begin{align*}
X_{il}=X_{li}=
\begin{cases}
\displaystyle 1\quad &\text{for any }l\in V_i,\\
\displaystyle -1\quad &\text{for any }l\in \tilde{V}_i,
\end{cases}
\end{align*}
we conclude that, in $\dsG_{N_i}(X)$, $N_i$ can be partitioned into two antagonistic factions $V_i$ and $\tilde{V}_i$, as defined in Definition~\ref{def:two-faction-balance}. Therefore, $\dsG_{N_i}(X)$ satisfies two-faction structure balance. \qed

\section{Proof of Fact~\ref{fact:triad-partition-subchordal}}\label{appendix:triad-partition-subchordal}
\emph{Proof:} We prove it by induction. Apparently, any triad, i.e., any cycle with length 3, always satisfies Fact~\ref{fact:triad-partition-subchordal}. Suppose that Fact~\ref{fact:triad-partition-subchordal} holds for any subchordal cycle with length less than $m$. Consider a subchordal cycle $\mathcal{C}=\{i_1,\dots, i_m\}$ on graph $\uG$ and an associated subgraph $\uG'$ as in Definition~\ref{def:subchordal-cycle}. Since $\uG'$ is a chordal graph, the cycle $\mathcal{C}$ on $\uG'$ has at least one chord, denoted by $\{i_j,i_k\}$ with $1\le j<k\le m$ and $j+1<k$. Due to Fact~\ref{fact:chord-split-cycle-into-2-cycles}, $\mathcal{C}_1=(i_1,\dots, i_j,i_k,\dots,i_m)$ and $\C_2=(i_j,i_{j+1},\dots, i_k)$ are both cycles on $\uG'$. That is, if we embed $\uG'$ on a plane such that $\mathcal{C}$ forms a convex polygon, then $\mathcal{C}_1$ and $\mathcal{C}_2$ also form two convex polygons and partition the polygon corresponding to $\mathcal{C}$ with no overlapping.

Moreover, since $\uG'$ is a chordal graph, according to Lemma~\ref{lem:induced-graphs-chordal} and Definition~\ref{def:subchordal-cycle}, $\mathcal{C}_1$ and $\mathcal{C}_2$ are two subchordal cycles on $\uG'$. Therefore, due to the assumption that Fact~\ref{fact:triad-partition-subchordal} holds for cycles with lengths less than $m$, there exist two sets of triads in $\uG'$ that partition the polygons corresponding to $\mathcal{C}_1$ and $\mathcal{C}_2$ with no overlapping respectively. Therefore, these two sets of triads combined together also partition the polygon corresponding to $\mathcal{C}$ with no overlapping.

\qed

\section{Proof of Theorem~\ref{thm:equivalence-two-balance}}\label{appendix:thm:equivalence-two-balance}

Before proving Theorem~\ref{thm:equivalence-two-balance}, we first present and prove the following lemma on some important properties of subchordal cycles.
\smallskip
\begin{lemma}[Properties of subchordal cycles]\label{lem:properties-subchordal-cycles}
Consider an undirected unsigned graph $\uG$ and a subchordal cycle $\mathcal{C}=(i_1,\dots, i_m)$ in it. Let $\uG'=(V',E')$ be a chordal subgraph associated with the cycle $\mathcal{C}$ as in Definition~\ref{def:subchordal-cycle}. The following statements hold:
\begin{enumerate}[topsep=-7pt,itemsep=2pt]
\item If $\{i_p,i_q\}$, with $1\le p<p+1<q\le n$, is a chord of $\C$ in $\uG'$, then the two cycles $(i_1,\dots, i_p,i_q,\dots, i_m)$ and $(i_p,i_{p+1},\dots, i_q)$ are both subchordal cycles in $\uG'$;
\item There exists at least one triad in $\uG'$ whose nodes are three consecutive nodes in $\C$. That is, there exists $k\in \{1,\dots, m\}$ such that $(i_{k-1},i_k,i_{k+1})$ is a triad in $\uG'$ (Here we take $i_0$ as $i_m$ and $i_{m+1}$ as $i_1$);
\item For any bilateral appraisal matrix $X\in \{-1,0,1\}^{n\times n}$ such that $\uG(X) = \uG$, if $\dsG(X)$ satisfies triad-wise structural balance, then the cycle $\C=(i_1,\dots, i_m)$ on $\dsG(X)$ is a positive cycle.
\end{enumerate}
\end{lemma}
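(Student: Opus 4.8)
The plan is to prove the three statements in the order given, since (ii) and (iii) both lean on (i). For statement (i), I would invoke Fact~\ref{fact:chord-split-cycle-into-2-cycles}: the chord $\{i_p,i_q\}$ splits $\mathcal{C}$ into $\mathcal{C}_1=(i_1,\dots,i_p,i_q,\dots,i_m)$ and $\mathcal{C}_2=(i_p,i_{p+1},\dots,i_q)$, and since every edge occurring in these two sub-cycles (the original cycle edges together with the chord) lies in $E'$, they are cycles in $\uG'$. To witness subchordality as required by Definition~\ref{def:subchordal-cycle}, I would take the subgraph of $\uG'$ induced on the node set of each sub-cycle. By Lemma~\ref{lem:induced-graphs-chordal} each such induced subgraph is chordal, it has exactly the node set of the corresponding sub-cycle, and it contains all of that sub-cycle's edges; hence both $\mathcal{C}_1$ and $\mathcal{C}_2$ satisfy Definition~\ref{def:subchordal-cycle} within $\uG'$.

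For statement (ii), the naive induction on cycle length fails: after splitting $\mathcal{C}$ by a chord, the three consecutive nodes furnished by the inductive hypothesis for a sub-cycle may straddle the chord and therefore need not be consecutive in $\mathcal{C}$. To avoid this, I would argue by a minimal-chord selection. The case $m=3$ is immediate, since $\mathcal{C}$ is itself the triad; for $m>3$, chordality of $\uG'$ guarantees $\mathcal{C}$ has a chord. Among all chords of $\mathcal{C}$ in $\uG'$ I would pick one, say $\{i_p,i_q\}$, whose shorter arc between its endpoints contains the fewest cycle edges, and after relabeling write that arc as $i_p,i_{p+1},\dots,i_q$. The sub-cycle $(i_p,\dots,i_q)$ can have no chord in $\uG'$, for any such chord would join two non-consecutive nodes of the arc and thus be a chord of $\mathcal{C}$ with a strictly shorter arc, contradicting minimality. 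Since the induced subgraph of $\uG'$ on $\{i_p,\dots,i_q\}$ is chordal (Lemma~\ref{lem:induced-graphs-chordal}) and this chordless sub-cycle has length greater than three only if it possesses a chord (Definition~\ref{def:chordal-graph}), its length must be exactly $3$, i.e. $q=p+2$. Then $(i_p,i_{p+1},i_{p+2})$ is the required triad of three consecutive nodes of $\mathcal{C}$.

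For statement (iii), I would induct on the length $m$. Triad-wise balance entails sign-symmetry (property P1), so each edge $\{i_k,i_{k+1}\}$ carries a well-defined sign $X_{i_k i_{k+1}}=X_{i_{k+1}i_k}\in\{-1,1\}$, and $\mathcal{C}$ is positive exactly when the product of these signs around $\mathcal{C}$ equals $+1$. For $m=3$ the triad satisfies property P2, giving $X_{i_1i_2}X_{i_2i_3}X_{i_3i_1}>0$ and hence product $+1$. For $m>3$, chordality of $\uG'$ yields a chord $\{i_p,i_q\}$, which also carries a sign since $\uG(X)=\uG$ and $X$ is bilateral and sign-symmetric. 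By statement (i) the sub-cycles $\mathcal{C}_1$ and $\mathcal{C}_2$ are subchordal in $\uG'$ and strictly shorter than $\mathcal{C}$, so the inductive hypothesis makes both positive. The key bookkeeping is that each original edge of $\mathcal{C}$ lies in exactly one of the two sub-cycle sign-products while the chord contributes to each exactly once; multiplying the two sub-cycle products therefore reproduces the sign-product of $\mathcal{C}$ up to a factor $(X_{i_p i_q})^2=1$, so it equals $(+1)\cdot(+1)=+1$ and $\mathcal{C}$ is positive.

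The main obstacle is statement (ii): the straightforward inductive splitting does not preserve consecutiveness of the extracted triad, and the fix is the minimal-arc chord selection producing a chordless, hence length-three, sub-cycle. A secondary point demanding care is the sign accounting in (iii), where one must confirm that the chord appears once in each sub-cycle product so that it cancels as $(X_{i_p i_q})^2=1$, leaving the product over $\mathcal{C}$ equal to the product of the two sub-cycle products.
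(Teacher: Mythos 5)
Your proof is correct, and for statement (i) it coincides with the paper's (which simply cites Fact~\ref{fact:chord-split-cycle-into-2-cycles} and Lemma~\ref{lem:induced-graphs-chordal}). The two places where you genuinely deviate are both sound. For (ii), the paper runs an iterative descent: it takes a chord, passes to the arc sub-cycle $(i_{p_j},\dots,i_{q_j})$, finds a chord of that, and repeats until the arc length drops to $2$; your extremal choice of a chord whose subtended arc is shortest is the same argument packaged as a minimality claim rather than a terminating iteration, and your observation that a naive length induction would lose consecutiveness is exactly the pitfall both versions avoid by only ever taking chords of arcs of the original cycle $\C$. For (iii), the paper uses (ii) to locate a triad $(i_{k-1},i_k,i_{k+1})$ of consecutive nodes and deletes the single vertex $i_k$, reducing the length by one and multiplying by the triad relation $X_{i_{k-1}i_k}X_{i_ki_{k+1}}X_{i_{k-1}i_{k+1}}=1$; you instead split at an arbitrary chord into the two sub-cycles from (i) and multiply their sign-products, cancelling $(X_{i_pi_q})^2=1$. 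Your version has the small advantage of making (iii) independent of (ii), at the cost of invoking a strong induction over both shorter sub-cycles rather than a single length-$(m-1)$ cycle; the bookkeeping you flag (each edge of $\C$ in exactly one sub-cycle, the chord in both) is exactly right and relies, as you note, on sign-symmetry from property P1.
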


\begin{proof}
Statement~(i) is a straightforward consequence of Definition~\ref{def:subchordal-cycle}, Fact~\ref{fact:chord-split-cycle-into-2-cycles}, and Lemma~\ref{lem:induced-graphs-chordal}.

Now we prove statement~(ii) by construction. Since $\uG'$ is a chordal graph, $\C$ must have a chord in $\uG'$, denoted by $\{i_{p_1},i_{q_1}\}$, with 
\begin{align*}
1\le p_1<q_1\le m\quad \text{and}\quad 1<q_1-p_1.
\end{align*}
If $q_1 = p_1+2$, then we already find the triad $(i_{p_1},i_{p_1+1},i_{q_1})$ in $\uG'$, whose nodes are three consecutive nodes in $\C$. Suppose $q_1>p_1+2$. According to Fact~\ref{fact:chord-split-cycle-into-2-cycles} and Lemma~\ref{lem:induced-graphs-chordal}, the cycle $\C_1=(i_{p_1},i_{p_1+1},\dots, i_{q_1})$ is a subchordal cycle in $\uG'$ and the induced subgraph $\uG'_{\{i_{p_1},i_{p_1+1},\dots, i_{q_1}\}}$ is a chordal graph. As a result, $\C_1$ has a chord in $\uG'_{\{i_{p_1},i_{p_1+1},\dots, i_{q_1}\}}$ (and thus also in $\uG'$), denoted by $\{i_{p_2},i_{q_2}\}$, with
\begin{align*}
p_1\le p_2<q_2\le q_1\quad \text{and}\quad 1<q_2-p_2 \le q_1-p_1-1.
\end{align*}
Iterate this argument as far as possible. For any $j$-th iteration, $(i_{p_j},i_{p_j+1},\dots, i_{q_j})$ is a subchordal cycle on $\uG'$ with
\begin{align*}
p_{j-1}\le p_j<q_j\le q_{j-1},\,\,\,\, 1<q_j-p_j \le q_{j-1}-p_{j-1}-1.
\end{align*}
Therefore, eventually we will obtain some $j^*\ge 1$ such that $(i_{p_{j^*}},i_{p_{j^*}+1},\dots,i_{q_{j^*}})$ is a subchordal cycle in $\uG'$ with $1\le p_{j^*}<q_{j^*}\le m$ and $q_{j^*}=p_{j^*}+2$. Let $k=p_{j^*}+1$ and thereby $(i_{k-1},i_k,i_{k+1})$ is a triad in $\uG'$.   

We prove statement~(iii) by induction. Apparently, if $\C$ has length 3, i.e., if $\C$ is a triad, then ``$\dsG(X)$ is triad-wise structurally balanced'' implies that $\C$ in $\dsG(X)$ is is a positive cycle. Suppose statement~(iii) holds for subchordal cycles with length less than $m$. Now we prove that it also hold for the cycle $\C$ with length $m$. According to statement~(iii), there exists a triad $(i_{k-1},i_k,i_{k+1})$ in $\uG'$ (and thus in $\uG$). Without loss of generality, assume $k>1$. Therefore, $\{i_{k-1},i_{k+1}\}$ is a chord of the cycle $\C$ in $\uG'$. According to statement~(i), the cycle $(i_1,\dots, i_{k-1},i_{k+1},\dots, i_m)$ is also a subchordal cycle. Therefore, according to the pre-assumption that statement~(iii) holds for any subchordal cycle with length less than $m$, the cycle $(i_1,\dots, i_{k-1},i_{k+1},\dots, i_m)$ in $\dsG(X)$ is a positive cycle, i.e., 
\begin{equation}\label{eq:proof-subchordal-cycle-positive1}
X_{i_1 i_2}\cdots X_{i_{k-1} i_{k+1}} \cdots X_{i_{m-1} i_m} = 1.
\end{equation}
Moreover, since $\dsG(X)$ satisfies triad-wise structural balance, we have $X_{i_{k-1} i_k}X_{i_k i_{k+1}}X_{i_{k+1} i_{k-1}}=1$ and $X_{i_{k+1} i_{k-1}}=X_{i_{k-1} i_{k+1}}$. Therefore,
\begin{equation}\label{eq:proof-subchordal-cycle-positive2}
X_{i_{k-1} i_k}X_{i_k i_{k+1}} X_{i_{k-1} i_{k+1}}=1.
\end{equation}
Multiplying both sides of equation~\eqref{eq:proof-subchordal-cycle-positive1} and~\eqref{eq:proof-subchordal-cycle-positive2}, we obtain
\begin{align*}
X_{i_1 i_2}\cdots X_{i_{k-1} i_k}X_{i_k i_{k+1}}\cdots X_{i_{m-1} i_m} ( X_{i_{k-1} i_{k+1}} )^2=1,
\end{align*}
which implies that $X_{i_1 i_2}\dots X_{i_{m-1} i_m}=1$, i.e., $\C$ is a positive cycle in $\dsG(X)$. This concludes the proof. 
\end{proof}

With Lemma~\ref{lem:properties-subchordal-cycles}, now we can prove Theorem~\ref{thm:equivalence-two-balance}.

\emph{Proof of Theorem~\ref{thm:equivalence-two-balance}:} The ``if'' part is exactly Fact~\ref{fact:two-faction-implies-traid-wise}. Here we prove the ``only if'' part. We prove the two-faction balance of $\dsG(X)$ by showing that all the cycles in $\dsG(X)$ are positive. For any cycle $\C=(i_1,\dots, i_m)$ in $\dsG(X)$, it falls into either of the following two cases.

Case 1: The induced subgraph $\uG_{\{i_1,\dots, i_m\}}$ is already a maximal cyclic subgraph. Since this cycle in $\uG$ is subchordal and $\dsG(X)$ is triad-wise balanced, according to Lemma~\ref{lem:properties-subchordal-cycles}(iii), the cycle $\C$ in $\dsG(X)$ is positive.

Case 2: $\uG_{\{i_1,\dots, i_m\}}$ is not a maximal cyclic subgraph. According to Definition~\ref{def:maximal-cyclic-subgraph}, this implies that there exists $V'$ with $\{i_1,\dots, i_m\}\subset V'$ such that the induced subgraph $\uG_{V'}$ is a maximal cyclic subgraph. According to condition~(i) of Theorem~\ref{thm:equivalence-two-balance}, there exists a cycle $\C'=(j_1,\dots, j_s)$ in $\uG$ such that $\{i_1,\dots, i_m\}\subset V'=\{j_1,\dots, j_s\}$ and $C'$ is subchordal. In addition, for any $r\in \{1,\dots,m\}$, there exists one unique $k_r\in \{1,\dots, s\}$ such that $i_r=j_{k_r}$. Note that we do not necessarily have $k_1\le k_2\le \dots \le k_m$.

Since the cycle $\C'$ is subchordal and $\dsG(X)$ is triad-wise balanced, according to Lemma~\ref{lem:properties-subchordal-cycles}(iii), $\C'$ is positive in $\dsG(X)$. Let $y_{j_1}=1$ and, for any $k\in \{2,\dots, s\}$, let
\begin{align*}
y_{j_k}=
\begin{cases}
\displaystyle 1\quad & \text{if }X_{j_1 j_2}\cdots X_{j_{k-1}j_k}=1,\\
\displaystyle -1\quad & \text{if }X_{j_1 j_2}\cdots X_{j_{k-1}j_k}=-1.
\end{cases}
\end{align*} 
In addition, since $(j_1,\dots,j_s)$ in $\dsG(X)$ is positive and $\dsG(X)$ is triad-wise balanced, we have $X_{j_s j_1}=X_{j_1 j_s}=y_{j_1}y_{j_s}$. Therefore, for any $k\in \{1,\dots,s\}$,
\begin{align*}
X_{j_k j_{k+1}} =X_{ j_{k+1} j_k}= y_{j_k}y_{j_{k+1}}.\quad \text{(We take }j_{s+1}\text{ as }j_1\text{)}.
\end{align*}

Now we look back to the cycle $(i_1,\dots, i_m)$. For any $r\in \{1,\dots, m\}$, if $|k_r-k_{r+1}|=1$, i.e., if $i_r$ and $i_{r+1}$ are also two consecutive nodes in the cycle $(j_1,\dots, j_s)$, then, according to the argument in the above paragraph,  we already have $X_{i_r i_{r+1}}=y_{i_r i_{r+1}}$. (Here we take $i_{m+1}$ as $i_1$.) If $|k_r-k_{r+1}|\ge 2$, without loss of generality, suppose $k_{r+1}\ge k_r +2$. According to condition~(ii) of Theorem~\ref{thm:equivalence-two-balance}, either $(j_1,\dots, j_{k_r}j_{k_{r+1}}\dots, j_s)$ or $(j_{k_r},j_{k_r+1},\dots, j_{k_{r+1}})$ is subchordal in $\uG$. Suppose $(j_1,\dots, j_{k_r}j_{k_{r+1}},\dots, j_s)$ is subchordal, since $\dsG(X)$ is triad-wise balanced, according to Lemma~\ref{lem:properties-subchordal-cycles}(iii), $(j_1,\dots, j_{k_r}j_{k_{r+1}},\dots, j_s)$ is positive in $\dsG(X)$. That is, 
\begin{align*}
& X_{j_1 j_2}\cdots X_{j_{k_r-1}j_{k_r}}X_{j_{k_r}j_{k_{r+1}}}X_{j_{k_{r+1}}j_{k_{r+1}+1}}\cdots X_{j_s j_1}=1\\
& \Leftrightarrow y_{j_1}^2\cdots y_{j_{k_r-1}}^2 y_{j_{k_r}}X_{j_{k_r} j_{k_{r+1}}}y_{j_{k_{r+1}}}y_{j_{k_{r+1}+1}}^2\cdots y_{j_s}^2 =1 \\
& \Leftrightarrow X_{j_{k_r}j_{k_{r+1}}} = X_{j_{k_{r+1}}j_{k_r}} = y_{j_{k_r}}y_{j_{k_{r+1}}}.
\end{align*}
Suppose $(j_{k_r},j_{k_{r}+1},\dots, j_{k_{r+1}})$ is subchordal in $\uG$, similarly, we have 
\begin{align*}
& y_{j_{k_r}}y_{j_{k_r+1}}^2\cdots y_{j_{k_{r+1}-1}}^2 y_{j_{k_{r+1}}} X_{j_{k_{r+1}}j_{k_r}}=1\\
\Leftrightarrow & X_{j_{k_{r+1}}j_{k_r}} = X_{j_{k_r} j_{k_{r+1}}} = y_{j_{k_r}} y_{j_{k_{r+1}}}.
\end{align*}
Till now we have proved that, for any $r\in \{1,\dots, m\}$, $X_{i_r i_{r+1}}=y_{i_r}y_{i_{r+1}}$. (Again we take $i_{m+1}=i_1$.) As a consequence,
\begin{align*}
X_{i_1 i_2}\cdots X_{i_m i_1} = y_{i_1}^2 \cdots y_{i_m}^2 = 1.
\end{align*}
That is, $(i_1,\dots, i_m)$ is a positive cycle in $\dsG(X)$.

Case 1 and Case 2 together imply that any cycle in \\$\dsG(X)$ is positive. Since $\dsG(X)$ is bilateral and $\uG(X)=\uG$ is connected, we have that $\dsG(X)$ is strongly connected. According to Lemma~\ref{lem:cycle-to-two-faction}, $\dsG(X)$ satisfies two-faction structural balance. This concludes the proof. \qed

\section{Proof of Proposition~\ref{prop:SIH-equilibrium-set}}\label{appendix:prop:SIH-equilibrium-set}
 \emph{Proof: } By definition, $X\!\in\! \{-1,0,1\}^{n\!\times\! n}$ is an equilibrium if and only if the following three conditions hold: i) $X_{ij}=X_{ji}$ for any $i\neq j$. That is,  $X$ is invariant under any update via symmetry mechanism; ii) $X_{ij}=X_{ik}X_{kj}$ for any $i\neq j$ and any $k$ (if any) such that $X_{ik}X_{jk}\neq 0$. That is, $X$ is invariant under any update via influence mechanism; iii) $X_{ij}=X_{ik}X_{jk}$ for any $i\neq j$ and any $k$ (if any) such that $X_{ik}X_{jk}\neq 0$. That is, $X$ is invariant under any update via homophily mechanism. Apparently, conditions i)-iii) hold if and only if $\dsG(X)$ is bilateral and $X_{ij}X_{jk}X_{ki}=1$ for any triad $(i,j,k)$ in $\dsG$, i.e, $\dsG(X)$ is triad-wise balanced. This concludes the proof.\qed

\section{Proof of Theorem~\ref{thm:convergence-to-traid-wise-balance}}\label{appendix:thm:convergence-triad-wise}

To prove Theorem~\ref{thm:convergence-to-traid-wise-balance} we need the following lemma.
\begin{lemma}[Convergence by manually constructing update sequences]\label{lem:randomness-to-control}
Given any initial condition $X(0)\in \{-1,0,1\}^{n\times n}$, let 
\begin{align*}
\Omega = \Big{\{} & Y\in \{-1,0,1\}^{n\times n} \,\Big|\, Y_{ij}\neq 0 \text{ if }X_{ij}(0)X_{ji}(0)\neq 0,\\
&\text{and }Y_{ij} = 0\text{ if }X_{ij}(0)=X_{ji}(0)=0,\, \forall i\,,j  \Big{\}}.
\end{align*}
The set $\Omega$ is invariant along the SIH dynamics. Moreover, the trajectory $X(t)$ starting with $X(0)$ almost surely reaches an equilibrium in finite time as long as, for any $\hat{X}(0)\in \Omega$, there exists an update sequence, along which the trajectory $\hat{X}(t)$ reaches an equilibrium in finite time.
\end{lemma}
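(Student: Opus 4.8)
The plan is to treat this as a reduction lemma: it converts the almost-sure statement into the purely combinatorial existence of finite escape routes, so I would prove it in two stages, first the invariance of $\Omega$ and then a geometric-decay argument on the finite invariant state space.

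First I would record that $\Omega$ is a finite subset of $\{-1,0,1\}^{n\times n}$ that contains $X(0)$, and establish invariance by a case analysis on the single entry $(i,j)$ being updated. Fix $X(t)\in\Omega$ and any admissible update. If $X_{ij}(0)=X_{ji}(0)=0$, then $\Omega$-membership forces $X_{ij}(t)=X_{ji}(t)=0$, so the pair $(i,j)$ is not even selectable and the entry is untouched. If exactly one of $X_{ij}(0),X_{ji}(0)$ vanishes, $\Omega$ imposes no constraint on $X_{ij}$, so the update is automatically admissible. The only substantive case is $X_{ij}(0)X_{ji}(0)\neq 0$, where I must show $X_{ij}(t+1)\neq 0$: the symmetry update returns $X_{ji}(t)\neq 0$, and the homophily update returns $X_{ik}(t)X_{jk}(t)$, which is nonzero since $k$ is chosen to be a common neighbor. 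The delicate step is the influence update $X_{ik}(t)X_{kj}(t)$, because the common-neighbor condition only guarantees $X_{jk}(t)\neq 0$ and not $X_{kj}(t)\neq 0$; here I would invoke the bilaterality that the symmetry mechanism enforces and that is preserved thereafter by Fact~\ref{fact:SIH-unchanged-topology}, under which $X_{kj}(t)\neq 0\iff X_{jk}(t)\neq 0$, so the influence update is nonzero as well. This proves that the trajectory launched from $X(0)$ stays in the finite set $\Omega$ for all time.

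Second, I would run the standard finite-state absorption argument on $\Omega$. The equilibria are absorbing states, since by Definition~\ref{def:SIH-equilibrium} no admissible update alters an equilibrium. Assuming, as is implicit in the model, that the random selection of the pair and of the common neighbor assigns positive probability to every admissible choice, each single-step update has strictly positive probability (the mechanism is chosen among $p_1,p_2,p_3>0$), and these probabilities are bounded below by some $\delta>0$ uniformly because $\Omega$ is finite. By hypothesis, from each $Y\in\Omega$ there is a finite update sequence of some length $L_Y$ terminating at an equilibrium, and concatenating positive-probability steps gives this path probability at least $\delta^{L_Y}\ge\delta^{L}=:\epsilon>0$, where $L=\max_{Y\in\Omega}L_Y<\infty$. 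Since an equilibrium is absorbing, reaching it within $L_Y\le L$ steps implies still sitting at an equilibrium at step $L$; hence from \emph{any} state of $\Omega$ the chain reaches an equilibrium within $L$ steps with probability at least $\epsilon$.

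Finally, by the time-homogeneous Markov property I would split time into consecutive blocks of length $L$ and conclude, via a standard geometric-decay (Borel--Cantelli) estimate, that the probability of not yet having reached an equilibrium after $mL$ steps is at most $(1-\epsilon)^m\to 0$ as $m\to\infty$; the trajectory therefore reaches an equilibrium almost surely in finite time. I expect the invariance step to be the main obstacle, precisely in ruling out that the influence mechanism zeroes out a bilateral link -- this is exactly why the bilaterality guaranteed by the symmetry mechanism and Fact~\ref{fact:SIH-unchanged-topology} is indispensable; once the uniform lower bound $\epsilon>0$ is secured on the finite invariant set, the absorption argument is entirely routine.
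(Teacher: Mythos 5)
Your overall strategy coincides with the paper's: establish that $\Omega$ is a finite invariant set and then run a finite-state absorption argument (the paper simply cites the absorbing-Markov-chain theorem from Grinstead and Snell where you unfold the standard geometric-decay proof; that half of your argument is correct and complete, including the uniform lower bound $\delta$ and the padding of short paths to a common length $L$).

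The gap is in the invariance step, at exactly the place you yourself call delicate. Your fix --- invoking bilaterality via Fact~\ref{fact:SIH-unchanged-topology} --- does not apply: that fact only says bilaterality is \emph{preserved from a time at which it already holds}; it does not make an arbitrary $\hat{X}(0)\in\Omega$ bilateral, and $\Omega$ contains non-bilateral matrices (for any pair with exactly one nonzero entry in $X(0)$, both corresponding entries of $Y$ are unconstrained). Concretely, take $n=3$ with $X_{12}(0)=X_{21}(0)=1$, $X_{13}(0)=X_{23}(0)=1$, $X_{31}(0)=X_{32}(0)=0$; this matrix is in $\Omega$. Selecting the pair $(1,2)$ with common neighbor $3$ (admissible, since $X_{13}(0)X_{23}(0)\neq 0$) and triggering the influence mechanism gives $X_{12}(1)=X_{13}(0)X_{32}(0)=0$, so $X(1)\notin\Omega$ even though $X_{12}(0)X_{21}(0)\neq 0$. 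Thus the invariance of $\Omega$ as literally defined fails under the influence update, and with it your reduction: the chain may leave $\Omega$ and land in a state not covered by the hypothesis that a finite path to an equilibrium exists from every state of $\Omega$. To be fair, the paper's own proof dismisses this point with ``one could easily check,'' so you have located a genuine soft spot rather than invented one; but your repair is not valid as written. A correct repair is to replace $\Omega$ by the genuinely invariant superset $\Omega'=\{Y \,:\, Y_{ij}=0 \text{ whenever } X_{ij}(0)=X_{ji}(0)=0\}$ (a pair with both entries zero is never selected, so this condition propagates, and finiteness of $\Omega'$ is all the absorption argument needs) and to verify that the manually constructed update sequence used in the proof of Theorem~\ref{thm:convergence-to-traid-wise-balance} succeeds from every state of $\Omega'$, which it does.
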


\begin{proof}
For any $X(0)\!\in\! \{-1,0,1\}^{n\!\times\! n}$, based on Definition~\ref{def:SIHdynamics}, one could eaily check that $X(t)\in \Omega$ for any $t$, i.e., $\Omega$ is invariant along the SIH dynamics. Since $\Omega$ is finite, $X(t)$ is a finite-state Markov chain over the states set $\Omega$, with the absorbing states equivalent to the equilibria of the SIH dynamics. Moreover, starting from any initial condition in $\Omega$, there exists at least one update sequence along which the trajectory of the SIH dynamics reaches an equilibrium in finite time. This means that, for such a Markov chain, from any state there is a path to an absorbing state. Therefore, this Markov chain is an absorbing Markov chain. According to Theorem~11.3 in~\cite{CMG-JLS:97}, $X(t)$ with $X(0)=X_0$ almost surely converges to an equilibrium of the SIH dynamics. Moreover, since $\Omega$ is finite, $X(t)$ with $X(0)=X_0$ almost surely reaches an equilibrium in finite time.
\end{proof}

\emph{Proof of Theorem~\ref{thm:convergence-to-traid-wise-balance}: }For any given $X(0)\in \{-1,0,1\}^{n\times n}$, define $\Omega$ as in Lemma~\ref{lem:randomness-to-control}. In order to prove that $X(t)$ reaches an equilibrium of the SIH dynamics in finite time, it is sufficient to show that, for any $\tilde{X}(0)\in \Omega$, we can manually construct an update sequence, along which the trajectory $\tilde{X}(t)$ reaches an equilibrium at some finite time $\tilde{T}$. Such an update sequence is constructed as follows.

Firstly, for any $t\in \mathbb{N}$, if there exists a pair of nodes $i$ and $j$ such that one of $\tilde{X}_{ij}(t)$ and $\tilde{X}_{ji}(t)$ is non-zero and the other is zero, pick such $i$ and $j$ and update $\tilde{X}_{ij}(t)$ according to the symmetry mechanism, i.e., $\tilde{X}_{ij}(t+1)=\tilde{X}_{ji}(t)$. Along this process, the number of such pairs of nodes in the appraisal network $\dsG(X(t))$ is strictly decreasing. Therefore, there exists some time $T_1\ge 0$ such that $\tilde{X}(T_1)$ is bilateral. According to the SIH dynamics as in Definition~\ref{def:SIHdynamics}, for any $t\ge T_1$, $\tilde{X}(t)$ will always be bilateral.

Secondly, for any $t>T_1$,
\begin{enumerate}[topsep=-7pt,itemsep=2pt]
\item if there exists $i,j$ such that $\tilde{X}_{ij}(t)=-1$ and $\tilde{X}_{ji}(t)=1$, then update the link $(i,j)$ via the symmetry mechanism, i.e., $X_{ij}(t+1)=X_{ji}(t)=1$;
\item if $\tilde{X}(t)=\tilde{X}(t)^{\top}$ but $\dsG(\tilde{X}(t))$ is not triad-wise balanced, then, according to Definition~\ref{def:triad-wise-balance}, there exists a triad $(i,j,k)$ in $\dsG(\tilde{X}(t))$ such that 
\begin{align*}
& \tilde{X}_{ij}(t)=\tilde{X}_{ji}(t)=-1\quad \text{and}\\
& \tilde{X}_{ik}(t)\tilde{X}_{kj}(t)=\tilde{X}_{ik}(t)\tilde{X}_{jk}(t)=1.
\end{align*}
In this case, update the link $(i,j)$ via either the influence mechanism or the homophily mechanism by choosing $k$ as their common neighbor. As the result, we obtain 
\begin{align*}
X_{ij}(t+1)=X_{ik}(t)X_{kj}(t) = X_{ik}(t)X_{jk}(t) = 1.
\end{align*}
\end{enumerate}
By definition, if neither of the two scenarios above applies at some time $\tilde{T}$, then $\dsG(\tilde{X}(\tilde{T}))$ already satisfies triad-wise structural balance, i.e., $\tilde{X}(\tilde{T})$ is already an equilibrium of the SIH dynamics, and the update sequence terminates at $\tilde{T}$.

Define $h(X)=\sum_{i,j}\vect{1}_{\{X_{ij}<0\}}$, i.e., the number of negative links in $\dsG(X)$. Starting from any $\tilde{X}(0)\in \Omega$ and along the update sequence constructed above, we know that, as long as $\dsG(\tilde{X}(t))$ has not achieved triad-wise structural balance, from $t$ to $t+1$ one negative link in $\dsG(\tilde{X}(t))$ will be updated and flipped to positive. Therefore, $h(\tilde{X}(t))$ is monotonically decreasing before $\dsG(\tilde{X}(t))$ achieves structural balance. Since $0\le h(\tilde{X}(0))<n(n-1)$, where $n(n-1)$ is the maximum number of directed links in a graph with $n$ nodes (not including self loops), we know that the update sequence constructed in last paragraph terminates in finite time steps. Moreover, since this algorithm terminates only when the appraisal network achieves triad-wise structural balance, we conclude that $\dsG(\tilde{X}(t))$ achieves triad-wise structural balance at time $\tilde{T}<n(n-1)$. Therefore, according to Lemma~\ref{lem:randomness-to-control}, any trajectory $X(t)$ starting from $X(0)$ almost surely reaches to an equilibrium in finite time.  \qed

\section{Proof of Theorem~\ref{thm:SIOH-equilibrium-convergence}}
\emph{Proof:} We first prove the ``if'' part of statement~(i). Since $\dsG(X)$ is bilateral and satisfies two-faction balance, according to Fact~\ref{fact:two-faction-implies-traid-wise}, $\dsG(X)$ also satisfies triad-wise balance. As Proposition~\ref{prop:SIH-equilibrium-set} implies, $X$ remain unchanged after an update of any pair of its nodes via either symmetry or influence or person-person homophily mechanisms. Moreover, for any link $(i,j)$, since $y_i=y_j$ of $i,j$ are in the same faction and $y_i=-y_j$ if $i,j$ are in different factions, we have that $X_{ij}=y_iy_j$, i.e., $X_{ij}$ remains unchanged via the person-opinion homophily, and, moreover, $y_j = X_{ij}y_i$, i.e., $y_j$ does not change via the gossip-like opinion dynamics. Therefore, $(X,y)$ remains unchanged via any possible update and is thus an equilibrium of the SIOH dynamics.

Now we prove the ``only if'' part of statement~(i). If $(X,y)$ is an equilibrium of the SIOH dynamics. Firstly, since $X$ remains unchanged via any possible update via the symmetry mechanism, we have $X=X^{\top}$, which implies that $\dsG(X)$ is bilateral. Let $V_1 = \{i\,|\,y_i=1\}$ and $V_2 = \{i\,|\, y_i=-1\}$. For any pair of nodes $i$ and $j$ in $V_1$, either $X_{ij}=0$ or $X_{ij}$ remain unchanged via the person-opinion homophily mechanism, which implies that $X_{ij}=y_iy_j = 1$. Due to the same argument, for any $i,\, j\in V_2$, either $X_{ij}=0$ or $X_{ij}=1$. Similarly, for any $i\in V_1$ and $j\in V_2$, either $X_{ij}=0$ or $X_{ij}=-1$. Therefore, $\dsG(X)$ satisfies the two-faction structural balance. By the construction of $V_1$ and $V_2$, we also have $y_1=y_j$ is $i,\, j\in V_1$ or $i,\, j\in V_2$, and $y_i=-y_j$ if $i\in V_1$ and $j\in V_2$. This concludes the proof for statement~(i).

For statement~(ii), we prove it by adopting the same strategy as how we prove Theorem~\ref{thm:convergence-to-traid-wise-balance}. That is, we show that, for any initial condition $X(0)$, we can manually construct one update sequence, along which the trajectory reaches an equilibrium in finite time. For any given $X(0)\in \{-1,0,1\}^{n\times n}$, the update sequence is constructed in the following way:  

Firstly, for any $t\in \mathbb{N}$, if there exists a pair of nodes $i$ and $j$ such that one of $\tilde{X}_{ij}(t)$ and $\tilde{X}_{ji}(t)$ is non-zero and the other is zero, pick such $i$ and $j$ and update $\tilde{X}_{ij}(t)$ according to the symmetry mechanism, i.e., $\tilde{X}_{ij}(t+1)=\tilde{X}_{ji}(t)$. Along this process, the number of such pairs of nodes in the appraisal network $\dsG(X(t))$ is strictly decreasing. Therefore, there exists some time $T_1\ge 0$ such that $\tilde{X}(T_1)$ is bilateral. According to the SIOH dynamics as in Definition~\ref{def:SIOH-dyn}, for any $t\ge T_1$, $\tilde{X}(t)$ will always be bilateral.

Secondly, for any $t>T_1$,
\begin{enumerate}[topsep=-7pt,itemsep=2pt]
\item if there exists $i,j$ such that $\tilde{X}_{ij}(t)=-1$ and $\tilde{X}_{ji}(t)=1$, then update the link $(i,j)$ via the symmetry mechanism, i.e., $X_{ij}(t+1)=X_{ji}(t)=1$;
\item if $X(t)$ is sign-symmetric but there exists a link $(i,j)$ such that $X_{ij}(t)=-1$ and $y_i(t)y_j(t)=1$, then update $X_{ij}(t)$ via the person-opinion homophily;
\item if $X(t)$ is sign-symmetric but there exists a link $(i,j)$ such that $X_{ij}(t)=1$, $y_i(t)=-1$, and $y_j(t)=1$, then update $y_i(t)$ via the gossip-like opinion dynamics.
\end{enumerate}
If, at some $T>T_1$, none of the above applies, then $X(T)$ is sign-symmetric and $X_{ij}(T)=y_i(T)y_j(T)$ for any link $(i,j)$. That is, $\dsG(X(T))$ is bilateral and already achieves two-faction structural balance. The two factions are $V_1=\{i\,|\, y_i(T)=1\}$ and $V_2 = \{i\,|\, y_i(T)=-1\}$, which in turn implies that $(X(t),y(t))$ reaches an equilibrium of the SIOH dynamics.

Define $h(X,y)=\sum_{i,j}\vect{1}_{\{X_{ij}<0\}}+\sum_i \vect{1}_{\{y_i<0\}}$. For $t\ge T_1$, $h(X(t),y(t))$ is strictly decreasing if at least one of cases (i)-(iii) occurs. Since $h(X,y)\ge 0$ for any $X$ and $y$, there must be some $T\ge T_1$ such that none of cases (i)-(iii) occurs, which implies that $(X(t),y(t))$ reaches an equilibrium at the finite time step $T$. This concludes the proof. 
\qed

\end{appendix}

\bibliographystyle{plainurl}
\bibliography{alias,Main,FB,new}

\end{document}